\DeclareMathOperator*{\argmax}{arg\,max}
  \providecommand\BibTeX{{%
    \normalfont B\kern-0.5em{\scshape i\kern-0.25em b}\kern-0.8em\TeX}}}
\begin{document}

\acmYear{2024}\copyrightyear{2024}
\setcopyright{acmlicensed}
\acmConference[MIDDLEWARE '24]{24th International Middleware Conference}{December 2--6, 2024}{Hong Kong, Hong Kong}
\acmBooktitle{24th International Middleware Conference (MIDDLEWARE '24), December 2--6, 2024, Hong Kong, Hong Kong}
\acmDOI{10.1145/3652892.3700788}
\acmISBN{979-8-4007-0623-3/24/12}

\newcommand{\ICC}{Internet Computer Consensus}
\newcommand{\iccin}{icc}
\newcommand{\ICCal}{\textsf{ICC}}
\newcommand{\iccal}{\textsf{icc}}

\newcommand{\FICC}{\textsc{Banyan}}
\newcommand{\ficcin}{banyan}
\newcommand{\FICCal}{\textsc{Banyan}}
\newcommand{\ficcal}{\textsc{banyan}}

\newcommand{\rp}[1]{{\left(#1\right)}}
\newcommand{\qp}[1]{{\left[#1\right]}}
\newcommand{\cp}[1]{{\left\{#1\right\}}}
\newcommand{\ap}[1]{{\left\langle#1\right\rangle}}

\newcommand{\abs}[1]{{\left|#1\right|}}
\newcommand{\floor}[1]{{\lfloor #1 \rfloor}}
\newcommand{\ceil}[1]{{\lceil #1 \rceil}}

\newcommand{\bin}[3]{
    \ifthenelse
    {\equal{#3}{}}
    {\text{Bin}\qp{#1, #2}}
    {\text{Bin}\qp{#1, #2}\rp{#3}}
}

\newcommand{\pois}[2]{
    \ifthenelse
    {\equal{#2}{}}
    {\text{Poisson}\qp{#1}}
    {\text{Poisson}\qp{#1}\rp{#2}}
}

\newcommand{\bern}[2]{
    \ifthenelse
    {\equal{#2}{}}
    {\text{Bern}\qp{#1}}
    {\text{Bern}\qp{#1}\rp{#2}}
}

\newcommand{\prob}[1]{{\mathcal{P}\qp{#1}}}

\newcommand{\true}{\text{\tt True}}
\newcommand{\false}{\text{\tt False}}

\newcommand{\powerset}[2]{
    \ifthenelse
    {\equal{#2}{}}
    {\mathbb{P}\rp{#1}}
    {\mathbb{P}^{#2}\rp{#1}}
}

\newcommand\et{\;\textbf{and}\;}
\newcommand\ou{\;\textbf{or}\;}

\NewDocumentCommand{\Break}{}{\State \textbf{break}}

\NewDocumentCommand{\event}{mmo}{
    \IfValueTF{#3}
    {{\ap{#1.\textrm{#2} \mid #3}}}
    {{\ap{#1.\textrm{#2}}}}
}

\NewDocumentCommand\newinstance{m}{{\;\textbf{new}\; \text{#1}}}

\algnewcommand\Instance[2]{\State #1, \textbf{instance} #2}
\algnewcommand\MultipleInstances[1]{\State #1, \textbf{multiple instances}}
\algnewcommand\Exposes[3]{\State #1 \textbf{exposes} #2 \textbf{as} #3}

\algnewcommand\Alias[2]{\State \textbf{use} $#1$ \textbf{as alias for} $#2$}

\algnewcommand\Trigger[3]{\State \textbf{trigger} $\event{#1}{#2}[#3]$}
\algnewcommand\TriggerPure[2]{\State \textbf{trigger} $\event{#1}{#2}$}
\algnewcommand\SetTimeout[2]{\State \textbf{set timeout} $#1$ \textbf{in} $#2$}

\algsetblockdefx[Implements]{Implements}{EndImplements}{}{}{\textbf{Implements:}}{}
\algsetblockdefx[Uses]{Uses}{EndUses}{}{}{\textbf{Uses:}}{}
\algsetblockdefx[Parameters]{Parameters}{EndParameters}{}{}{\textbf{Parameters:}}{}
\algsetblockdefx[Optimizations]{Optimizations}{EndOptimizations}{}{}{\textbf{Optimizations:}}{}

\algsetblockdefx[Upon]{Upon}{EndUpon}{}{}[3]{\textbf{upon event} $\event{#1}{#2}[#3]$ \textbf{do}}{}
\algsetblockdefx[UponPure]{UponPure}{EndUponPure}{}{}[2]{\textbf{upon event} $\event{#1}{#2}$ \textbf{do}}{}
\algsetblockdefx[UponTimeout]{UponTimeout}{EndUponTimeout}{}{}[1]{\textbf{upon timeout} $#1$ \textbf{do}}{}
\algsetblockdefx[UponExists]{UponExists}{EndUponExists}{}{}[2]{\textbf{upon exists} $#1$ \textbf{such that} $#2$ \textbf{do}}{}
\algsetblockdefx[UponCondition]{UponCondition}{EndUponCondition}{}{}[1]{\textbf{upon} $#1$ \textbf{do}}

\algsetblockdefx[Procedure]{Procedure}{EndProcedure}{}{}[2]{\textbf{procedure} \emph{#1}($#2$) \textbf{is}}{}

\algsetblockdefx[IfExists]{IfExists}{EndIfExists}{}{}[2]{\textbf{if exists} $#1$ \textbf{such that} $#2$ \textbf{then}}{\textbf{end if}}
\algsetblockdefx[While]{While}{EndWhile}{}{}[1]{\textbf{while} $#1$ \textbf{do}}{\textbf{end while}}
\algdef{SE}[DoUntil]{DoUntil}{EndDoUntil}{\textbf{do}}[1]{\textbf{until} $#1$}
\algsetblockdefx[For]{For}{EndFor}{}{}[3]{\textbf{for} $#1 \;\textbf{in}\; #2..#3$ \textbf{do}}{\textbf{end for}}
\algsetblockdefx[ForAll]{ForAll}{EndForAll}{}{}[2]{\textbf{for all} $#1 \;\textbf{in}\; #2$ \textbf{do}}{\textbf{end for}}
\algsetblockdefx[ForAllSuchThat]{ForAllSuchThat}{EndForAllSuchThat}{}{}[3]{\textbf{for all} $#1 \;\textbf{in}\; #2$ \textbf{such that} $#3$ \textbf{do}}{\textbf{end for}}
\algsetblockdefx[NTimes]{NTimes}{EndNTimes}{}{}[1]{\textbf{for} $#1$ \textbf{times do}}{\textbf{end for}}

\title{\FICC: Fast Rotating Leader BFT}

\author{Yann Vonlanthen}
\email{yvonlanthen@ethz.ch}
\affiliation{%
  \institution{ETH Zurich}
  \country{Switzerland}
}
\author{Jakub Sliwinski}
\email{jsliwinski@ethz.ch}
\affiliation{%
  \institution{ETH Zurich}
  \country{Switzerland}
}
\author{Massimo Albarello}
\email{massimo@onfabric.io}
\affiliation{%
  \institution{ETH Zurich}
  \country{Switzerland}
}

\author{Roger Wattenhofer}
\email{wattenhofer@ethz.ch}
\affiliation{%
  \institution{ETH Zurich}
  \country{Switzerland}
}

\renewcommand{\shortauthors}{Y. Vonlanthen, J. Sliwinski, M. Albarello, R. Wattenhofer}

\begin{abstract}
This paper presents \FICC, the first rotating leader state machine replication (SMR) protocol that allows transactions to be confirmed in just a single round-trip time in the Byzantine fault tolerance (BFT) setting. Based on minimal alterations to the Internet Computer Consensus (ICC) protocol and with negligible communication overhead, we introduce a novel dual mode mechanism that enables optimal block finalization latency in the fast path. Crucially, the modes of operation are integrated, such that even if the fast path is not effective, no penalties are incurred. Moreover, our algorithm maintains the core attributes of the ICC protocol it is based on, including optimistic responsiveness and rotating leaders without the necessity for a view-change protocol.

We prove the correctness of our protocol and provide an open-source implementation of it. \FICC\ is compared to its predecessor ICC, as well as other well known BFT protocols, in a globally distributed wide-area network. Our evaluation reveals that \FICC\ reduces latency by up to 30\% compared to state-of-the-art protocols, without requiring additional security assumptions. 

\end{abstract}

\begin{CCSXML}
<ccs2012>
   <concept>
       <concept_id>10010520.10010521.10010537</concept_id>
       <concept_desc>Computer systems organization~Distributed architectures</concept_desc>
       <concept_significance>500</concept_significance>
       </concept>
   <concept>
       <concept_id>10002978.10003006.10003013</concept_id>
       <concept_desc>Security and privacy~Distributed systems security</concept_desc>
       <concept_significance>300</concept_significance>
       </concept>
 </ccs2012>
\end{CCSXML}

\ccsdesc[500]{Computer systems organization~Distributed architectures}
\ccsdesc[300]{Security and privacy~Distributed systems security}

\keywords{Consensus, Blockchain, Byzantine fault tolerance, Fast Path, State Machine Replication} %

\maketitle

\section{Introduction}
\label{sec:introduction}
Byzantine fault tolerance (BFT) is a class of protocols that provide guarantees in the presence of arbitrary faults, such as a powerful adversary controlling both a share of participants and the network scheduling~\cite{lamport1982byzantine}. BFT protocols can be applied whenever a fixed (also called permissioned) set of untrusted parties desires to reach an agreement.

This core primitive has enabled the creation of decentralized protocols that are akin to a world computer, enabling anyone to run Turing-complete programs. Various such world computers exist today~\cite{buterin2013ethereum, dfinityIC, filecoin, blackshear2023sui}, with varying trade-offs regarding performance, security, and inclusiveness. At each system's core, lays an ordering protocol assuring that resource accesses are consistent across all participants (called replicas), thus guaranteeing deterministic and secure execution of the computation.

Byzantine agreement (BA) protocols typically provide consensus on a single decision, while state machine replication protocols (SMR) focus on making efficient use of this costly primitive. 
In practice, the most widely used consensus protocols (such as Bitcoin~\cite{nakamoto2009bitcoin}, Ethereum~\cite{buterin2013ethereum}, and Algorand~\cite{gilad2017algorand}) make use of an elected leader to propose a batch of transactions, called a block. A total order across leaders is then obtained by chaining blocks.

There are a few considerations that are especially crucial to such leader-based protocols. Firstly, leaders can misbehave and propose conflicting blocks. While this can be detected, the resulting change of the leader (called \textit{view-change} in the literature) leads to notoriously high complexity. Honest, but slow leaders are just as problematic, as they cannot easily be called out, but can stall the effective throughput nonetheless~\cite{allen2009leaderslow}.
Moreover, relying on a single leader leads to uneven load, both in terms of computation and communication~\cite{crain2018dbft}, and might facilitate censorship. Finally, leaders sometimes extract value from their control of the block creation. On the Ethereum blockchain, for example, miner-extractable value (MEV) is a major source of income for replicas and leads to potentially dangerous incentives to deviate from honest behavior~\cite{daian2020flash}. These issues can at least partially be resolved by employing rotating or random leaders~\cite{camenisch2022}, which is the class of protocols we aim to improve upon in this work.

Specifically, we address what we believe to be a major roadblock in the adoption of decentralized computers: high latency. In contrast to privacy and security, latency is at the forefront of the user experience. While centralized and trusted applications (such as credit card payments or social media) provide almost instantaneous services, their Byzantine fault-tolerant counterpart typically comes at the cost of a noticeable delay, in some cases in the order of minutes or even hours (such as in Bitcoin). Furthermore, contrary to other metrics such as throughput, latency cannot be readily improved by hardware upgrades.

\begin{figure}[]
\centering
\includegraphics[width=0.45\textwidth]{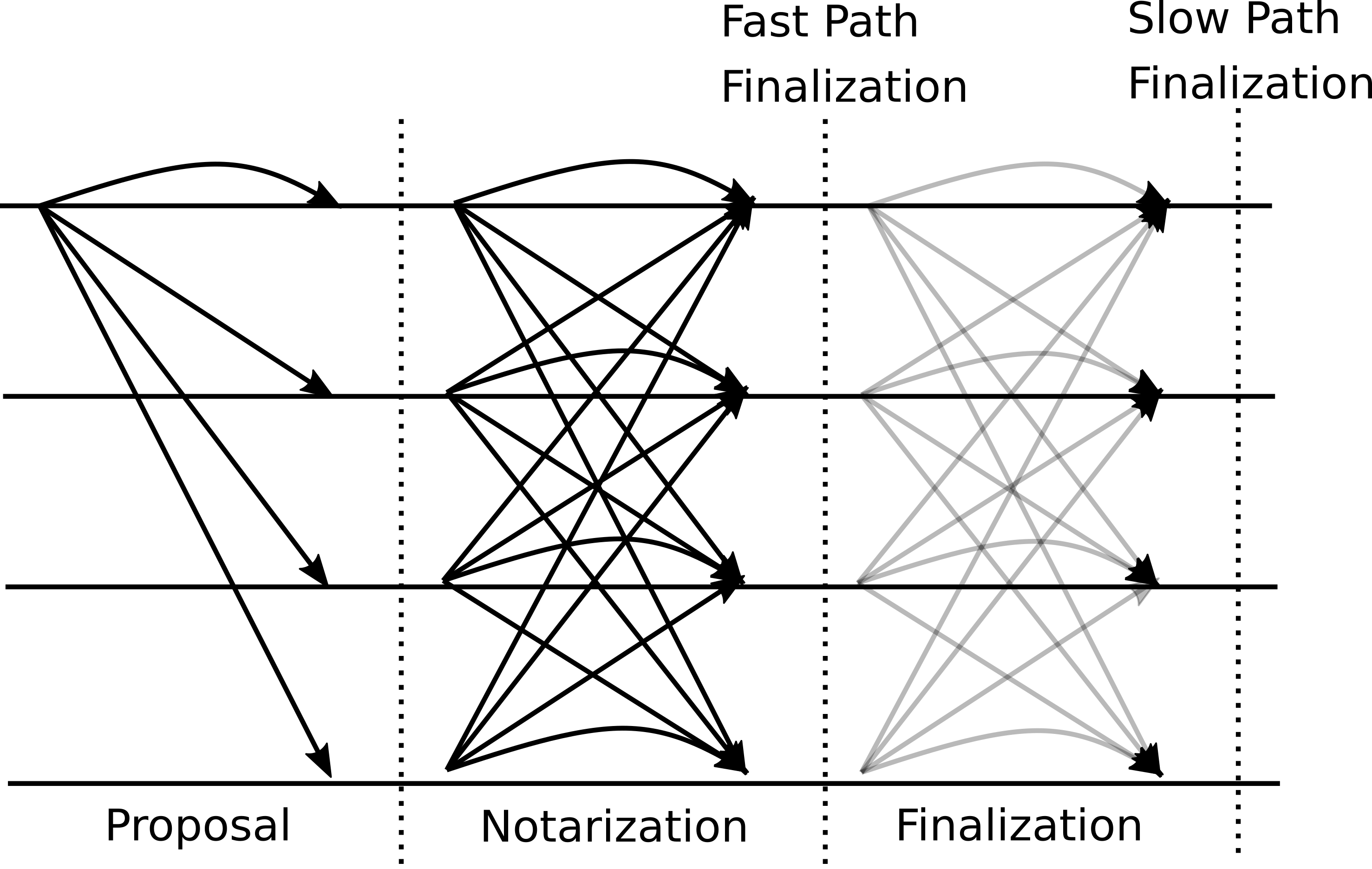}
\caption{\FICC\ can terminate after just two communication steps. Existing rotating leader BFT protocols require at least three communication steps.} 
\label{fig:communication_rounds}
\end{figure}

In this work, we present the first rotating leader SMR protocol that allows transactions to be confirmed in just a single round-trip time. This is optimal, as to achieve fault tolerance the block must 1) reach a fraction of replicas, that have to 2) respond to acquiesce to the processing~\cite{song2008bosco}. This is visualized in \Cref{fig:communication_rounds}.

We are able to achieve this by uniting two lines of research. The first deals with the theoretical bounds of latency, and has established optimally fast single-shot protocols in the good case~\cite{kursawe2002optimistic, martin2006fast, aublin2015next, kuznetsov2021revisiting, abraham2021good}. The second, more well-known line of research, provides a class of consensus protocols that solve state machine replication, in an efficient and fair way, by employing rotating or random leaders. Prominent examples include HotStuff, Tendermint, Casper, Algorand and Bullshark~\cite{yin2019hotstuff, buchman2016tendermint, buterin2017casper, gilad2017algorand, spiegelman2022bullshark}.

As Chan and Pass point out~\cite{chan2023simplex}, rotating leader protocols can be differentiated by either favoring \textit{proposal confirmation time}, meaning block finalization latency, or \textit{block creation time}, corresponding to the delay between the blocks and thus the associated throughput. The Internet Computer Consensus family of protocols~\cite{camenisch2022} excels in both of these metrics.

\textbf{Our contribution.}
We present the \FICC\ protocol that achieves optimal consensus latency while preserving the state-of-the-art characteristics of ICC.
\begin{enumerate}
    \item \FICC\ is the first SMR protocol with rotating leaders supporting two-step chain growth and two-step finalization under partial synchrony. We prove the safety and liveness of \FICC\ for $n \geq 3f + 2p^* - 1$, where $n$ is the total number of replicas in the network, and $f$ is the maximum number of Byzantine replicas tolerated. Additionally, $p^* \ge 1$ is a parameter that can be set freely ($p^* \leq f$), and determines the effectiveness of the fast path. We prove that as long as no more than $p^*$ replicas are unresponsive, \FICC\ finalization latency is just two network delays.
    \item We provide a proof-of-concept \FICC\ implementation written in Golang, built on top of the Bamboo BFT framework~\cite{gai2021bamboo}. %
    \item We measure the proposal finalization time in three different wide area networks. We empirically demonstrate that, compared to ICC, HotStuff, and Streamlet, \FICC\ achieves the fastest proposal finalization time, improving by up to 30\% over the runner-up. Further, we show that \FICC\ does not suffer from increased variance in its latency, and behaves as ICC under crash-faults.
\end{enumerate}

\section{Related Work}

Byzantine fault tolerance was introduced by Lamport et al. in 1982~\cite{lamport1982byzantine}. A long line of work studies consensus protocols in the permissioned setting under various synchrony assumptions~\cite{ben1983another, bracha1987asynchronous, lamport2003lower}). The partially synchronous network model was first introduced in~\cite{dwork1988consensus}, together with the first consensus protocol for this setting. The first truly practical protocol in the partially asynchronous setting is the PBFT protocol, as described in~\cite{castro1999practical}.

\textbf{Fast Consensus.} Our algorithm relies on a long line of established work on reaching consensus in two communication steps (or one round trip time), also referred to as fast or early-stopping consensus~\cite{brasileiro2001consensus, kursawe2002optimistic, friedman2005simple, song2008bosco, martin2006fast, guerraoui2007refined}. As shown by Brasileiro et al.~\cite{brasileiro2001consensus} in the crash-fault model, it is possible to terminate early if ``enough'' acknowledgments are received. A year later, a fast path was presented by Kursawe~\cite{kursawe2002optimistic}, in which a two-step fast path is paired with a subprotocol in the slow path. Upon the expiration of a timer, a fallback subprotocol is used to ensure liveness. 

Song et al. describe Bosco~\cite{song2008bosco}, an algorithm that provides a two-step fast path to any underlying consensus algorithm. Their fast path does not rely on synchrony assumptions, and, assuming $n > 5f$, is triggered when all servers are in pre-agreement, i.e., propose the same value. Assuming $n > 7f$, the fast path is triggered when all honest servers are in pre-agreement instead. Using similar building blocks it was later shown that consensus can be performed ``on demand'', i.e., the fallback is only used when the fast path does not succeed~\cite{sliwinski2022}. %
As \Cref{fig:comparison} shows, these approaches come at the cost of a longer recovery during the fallback on the slow path though, as the slow path can only be started once the fast path round is over. In this work, we show that both the (single) fast path round and the first slow path round can happen simultaneously (with only constant message overhead). 

\begin{figure}
\includegraphics[width=0.9\columnwidth]{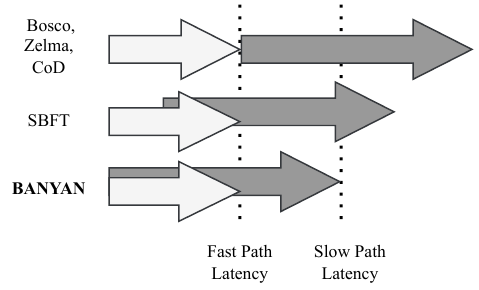}
\centering
\caption{The different approaches of fast path protocols. In many protocols, the slow path starts after the fast path fails. In \FICC, the fast path is integrated into the slow path.} 
\label{fig:comparison}
\end{figure}

Earlier, Martin and Alvisi introduce Fast Byzantine Paxos (FaB Paxos)~\cite{martin2006fast}, a family of protocols parameterized not only by $f$ but also $p$ ($0 \leq p \leq f$), the number of (non-leader) replicas that are not needed for the fast path to succeed. Their algorithms provide Byzantine agreement, when $n \geq 3f + 2p + 1$, where the fast path triggers as long as $n-p$ servers behave honestly and are in pre-agreement.
Instead of relying on a variant of PBFT in the slow path, we integrate our protocol with ICC, sidestepping the complexity introduced by view changes. 

The lower bound by Martin and Alvisi~\cite{martin2006fast} has recently been improved to $n \geq max(3f+2p -1, 3f+1)$ by Kuznetsov et al.~\cite{kuznetsov2021revisiting} and Abraham et al.~\cite{abraham2021good}. Their insight is that misbehaving (i.e., equivocating) leaders can be reliably detected, and thus acceptors can wait for $n-f$ votes, \textit{excluding} the malicious leader ($n-f+1$ votes in total). To the best of our knowledge, we present the first protocol that matches this lower bound without the need for a \textit{view-change} protocol. %

\textbf{Fast SMR.}
Ideas from the FaB protocol were borrowed by the Zyzzyva protocol~\cite{kotla2007zyzzyva} as well, which turns the single-shot FaB consensus into an SMR protocol. Improving upon Zyzzyva, Aliph~\cite{aublin2015next} reduces the latency to two steps. These protocols do not allow running both the fast path and a slow path concurrently, though, and thus suffer from switching costs (see \Cref{fig:comparison}). Zyzzyva safety issues were pointed out by~\cite{abraham2017revisiting} and led to the development of Zelma~\cite{abraham2018zelma}, combining FaB and Zyzzyva's benefits. 

Zelma is also at the core of the SBFT protocol~\cite{gueta2019sbft}. SBFT is the first protocol running both pessimistic and optimistic paths simultaneously in a dual mode. However, the fast path of SBFT has one more communication step than \FICC, and assumes $n \ge 3f + 2p + 1$. Moreover, the slow path of SBFT is not optimistically responsive, as it triggers only after a time-out (\cref{fig:comparison}).
A large body of work has based itself on the 3-phase (6-step) HotStuff~\cite{yin2019hotstuff} protocol and presented improvements upon it~\cite{jalalzai2020fast,malkhi2023hotstuff, kang2024hotstuff}. %
Jolteon and Ditto~\cite{gelashvili2022jolteon} have improved the latency while adding an asynchronous fallback to enhance its performance during epoch synchronization (see \Cref{tab:related_work}).  

Most recently Malkhi and Nayak propose a 2-phase (4-step) protocol and a simpler fast path~\cite{malkhi2023hotstuff}. These protocols reach no lower than $3 \delta$ proposer latency in the fast path. %

\begin{table*}[ht]
\centering
\begin{tabular}{@{}lcccccc@{}}
\toprule
& \begin{tabular}[c]{@{}c@{}}Block \\ Finalization \\ Latency\end{tabular} & \begin{tabular}[c]{@{}c@{}}Block \\ Finalization\\ Requirement\end{tabular}  & \begin{tabular}[c]{@{}c@{}}Block \\ Creation \\ Latency\end{tabular} & \begin{tabular}[c]{@{}c@{}}Block \\ Creation \\ Requirement\end{tabular}                     & \begin{tabular}[c]{@{}c@{}}Number of \\ Replicas \textit{equals}\end{tabular} & \begin{tabular}[c]{@{}c@{}}Supports \\ Rotating \\ Leaders\end{tabular} \\ \midrule
Casper FFG~\cite{buterin2017casper}                                                                  & $O(\Delta)$                                                 & $2f + 1$                                           & $O(\Delta)$                                                     & $\text{N/A}^\dagger$                                                                                     & \begin{tabular}[c]{@{}l@{}}$3f+1$ \end{tabular}     & \checkmark                                                              \\ 
\begin{tabular}[c]{@{}l@{}} Fast HotStuff~\cite{jalalzai2020fast}$^\ddagger$ \end{tabular} & $5 \delta$                                                  & $2f+1$                                                              & $2 \delta$                                                      & $2f+1$                                                                                    & $3f+1$                                                       &                                                                         \\  
\begin{tabular}[c]{@{}l@{}}Jolteon~\cite{gelashvili2022jolteon} \end{tabular} & $5 \delta$                                                  & $2f+1$                                                              & $2 \delta$                                                      & $2f+1$                                                                                    & $3f+1$                                                       &                                                                         \\  

PaLa~\cite{chan2018pala}                                                                      & $4 \delta$                                                  & $2f + 1 $                                           & $2 \delta$                                                      & $2f + 1 $                                                                 & $3f+1$                                                       &                                                                         \\  
Zelma~\cite{abraham2018zelma}                                                                       & $2 \delta$                                                  & $3f + p + 1$                                                             & $2 \delta$                                                      & $2f + p + 1$                                                                               & $3f + 2p + 1$                                                &                                                                         \\  
SBFT~\cite{gueta2019sbft}                                                                       & $3 \delta$                                                  & $3f + p + 1$                                                        & $3 \delta^\mathsection$                                                    & $2f + p + 1$                                                                              & $3f + 2p + 1$                                                &                                                                        \\

\begin{tabular}[c]{@{}l@{}}Streamlet~\cite{chan2020streamlet} \end{tabular}                                                                       & $6 \Delta$                                                  & $2f + 1$                                                               & $2 \Delta$                                                      & $2f + 1$                                                                                     & $3f+1$                                                       & \checkmark        
\\

Bullshark~\cite{spiegelman2022bullshark}                                                                   & $4 \delta ^\mathparagraph$                                                 & $2f+1$ & $2 \delta$                                                     & $2f+1$  & $3f+1$                                                       & \checkmark                                                              \\  
BBCA-Chain~\cite{malkhi2023bbca}                                                                   & $3 \delta$                                                 & $2f+1$ & $3 \delta$                                                     & $2f+1$ & $3f+1$                                                       & \checkmark                                                              \\  

\begin{tabular}[c]{@{}l@{}}ICC~\cite{camenisch2022} / Simplex~\cite{chan2023simplex} \end{tabular}                                                                       & $3 \delta$                                                  & $2f + 1$                                                               & $2 \delta$                                                      & $2f + 1 $                                                                                     & $3f+1$                                                       & \checkmark                                                            

                                         \\  
\begin{tabular}[c]{@{}l@{}}Mysticeti~\cite{babel2023mysticeti} \end{tabular}                                                                       & $3 \delta$                                                  & $2f + 1$                                                               & $1 \delta$                                                      & $2f + 1$                                                                                     & $3f+1$                                                       & \checkmark        
\\
\midrule
\FICCal                                                                      & ${2 \delta}$                                                  & $3f + p^* - 1$                                                             & $2 \delta$                                                      & $2f + p^*$          & $3f + 2p^* - 1$                                                & \checkmark                                                              \\ \bottomrule
\end{tabular}%
\caption{Popular and Fast State Machine Replication Protocols. To simplify comparison, we assume that the number of replicas is equal to the respective lower bound. $\Delta$ denotes the message delivery time upper bound, while $\delta$ is the true message delivery time. 
$^\dagger$ Non-equivocation is enforced by slashing. 
$^\ddagger$ We consider the pipelined version of Fast HotStuff. 
$^\mathsection$ To the best of our knowledge no pipelining is specified for SBFT. 
$^\mathparagraph$ We consider Bullshark's best case latency (anchor blocks). 
$^*$ For simplicity, we replace $p$ by $p^*, p^* \geq 1$.}
\label{tab:related_work}
\end{table*}

\textbf{Rotating Leader BFT protocols.}  Many recent advances in BFT protocols rely on rotating leaders~\cite{buchman2016tendermint, buchnik2019fireledger, yin2019hotstuff, chan2020streamlet}, as originally introduced by Veronese et al.~\cite{veronese2009spinning}. Mir-BFT~\cite{stathakopoulou2019mir} and its follow-up work~\cite{stathakopoulou2022state} for example, improve upon sequential-leader approaches, by running PBFT instances on a set of leaders. When a leader is slow, it is replaced. 
Other SMR protocols are designed to perform well under attack~\cite{amir2010prime, avarikioti2023fnf}. However, in the permissioned blockchain setting it is often assumed that during periods of synchrony, misbehavior can be punished, either by exclusion or slashing~\cite{buterin2017casper}. As such, we focus on the latency in the optimistic case and satisfy the same pessimistic liveness as ICC, which was shown to be adequate in~\cite{chan2023simplex}. 
Recent concurrent work by Chan and Pass~\cite{chan2023simplex} provides a theoretical framework for this class of algorithms, and \Cref{tab:related_work} is partially inspired by it. Their proposed protocol called Simplex falls into the same category as the ICC and \FICC\ algorithms but improves \textit{pessimistic liveness} by only allowing a single leader per round. We believe that a technique similar to the one shown in this work could be applied to bring our fast path to Simplex too. 
In~\cite{abraham2021optimal} Abraham et al. prove that $2\Delta$ is the upper and lower bound for the good-case latency of rotating leader BFT protocols, in the synchronous setting. In this work, we provide a matching upper bound in the partially synchronous setting.

\textbf{DAG-based BFT protocols.} Another line of work proposes to improve the throughput of leader-based protocols by allowing all parties to broadcast transactions simultaneously. Many so-called DAG-based protocols have emerged recently~\cite{keidar2021alldag, spiegelman2022bullshark, babel2023mysticeti, malkhi2023bbca}.

One main idea is to disconnect transactions broadcasting from finalization~\cite{danezis2022narwhal}, and another is to allow the finalization of blocks outside the main blockchain by causally referencing them~\cite{lewenberg2015inclusive, keidar2021alldag}. In Bullshark~\cite{spiegelman2022bullshark}, all replicas are parallel leaders and broadcast one batch of transactions in each round using Byzantine Consistent Broadcast. While doing so, they include references to at least $n-f$ blocks from the previous round. A more recent protocol called BBCA-Chain~\cite{malkhi2023bbca} reduces Bullshark's latency by removing the need for specialized block layers and instead broadcasts blocks using a primitive called Byzantine Broadcast with Complete-Adopt (BBCA). Mysticeti~\cite{babel2023mysticeti} forgoes the confirmation of blocks and instead relies solely on the DAG edges as input to the protocol, leading to better optimistic latency. ICC does not allow causally referencing non-leader blocks to finalize them and thus does not benefit from the same throughput advantages. However, we believe ICC is a good fit to describe our fast path mechanism, and hypothesize that it applies to DAG protocols too.

In \Cref{tab:related_work} an overview of state-of-the-art protocols is shown. For both the block finalization latency and the block creation latency, we have included the required number of replicas that have to respond in order to proceed in the optimistic case (synchronous round with a block proposed by an honest leader). While irrelevant in the calculation of the theoretical latency bounds, we found that small changes in these requirements could have large effects in practice, especially on global-scale deployments as they are seen today. (Intuitively, it is not uncommon for a few outlier replicas to have a higher latency. If progress can be made safely without them, a large performance increase can be experienced.)

\textbf{Other aspects.} Many works such as~\cite{malkhi2023hotstuff, camenisch2022, gueta2019sbft} also focus on the message complexity of the fast path and the view change. We do describe how BLS signature aggregation~\cite{boneh2018compact} can be used for \FICC, but leave further advancements to future work. Message complexity and performance do not always go hand in hand in practice~\cite{danezis2022narwhal,camenisch2022} and oftentimes probabilistic broadcast through gossiping is preferred~\cite{libp2p}.

\section{Model}
We consider a network of $n \geq \max{(3f + 2p - 1, 3f + 1)}$ participants called replicas. Our protocol is safe and live with up to $f$ replicas being Byzantine. Additionally, we guarantee \textit{fast termination} when up to $p$ replicas are not cooperating. In other words, the parameter $p \in [0, f]$ denotes the maximum number of replicas that are not needed for the fast path to be successful. Note that there is no reason to choose $p = 0$, as our protocol will be strictly faster with $p=1$, and require the same number of replicas. We can thus also write $n \geq 3f + 2p^* - 1$, where $p^* \in [1, f]$.

By setting $p=1$, we reach the upper bound on the number of Byzantine replicas permissible~\cite{dwork1988consensus}, i.e. $n \geq 3f + 1$. In this case, the fast path will be used when the leader and all but one replica behave honestly. On the other side of the spectrum, given an honest leader, the fast path can be made robust against Byzantine behavior, by setting $p=f$. 

We consider the same communication model used in ICC, in which replicas communicate over a partially synchronous network~\cite{dwork1988consensus}. 
In a synchronous network, there is a known fixed upper bound $\Delta$ on the time required for a message to be sent from one party to another. In an asynchronous network, no fixed upper bound $\Delta$ exists. We use $\delta$ to denote the \textit{true message delivery time} of replicas, i.e., the unknown time it takes for one communication step across all replicas.
In partial synchrony, the network alternates intervals of synchrony, in which the bound $\delta < \Delta$ holds, to ones of asynchrony. 

Replicas communicate via all-to-all authenticated links. We assume the existence of a public-key infrastructure (PKI), secure digital signatures, and collision-resistant hash functions. Additionally, we assume replicas have access to shared randomness, e.g., through a safe and live random beacon protocol~\cite{raikwar2022randombeacon}. 

\section{Internet Computer Consensus}
\label{sec:icc}
The slow path of the \FICC\ algorithm corresponds almost precisely to the \ICC\ protocol (\ICCal), which we thus introduce first.

\begin{figure*}[h]
\centering
\includegraphics[width=0.85\textwidth]{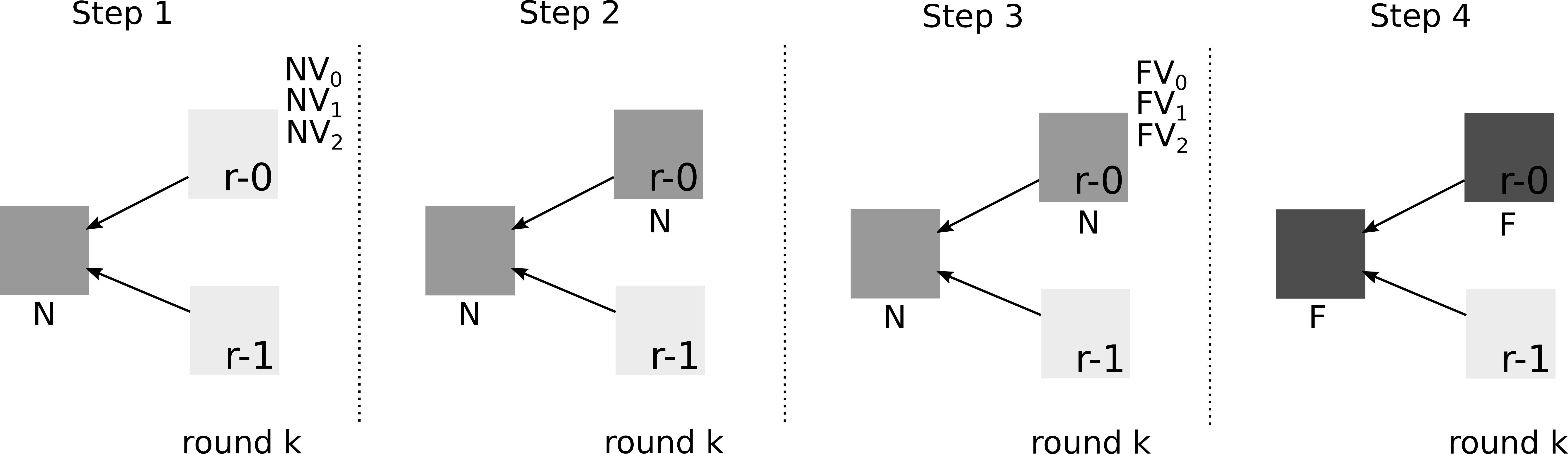}
\caption{The diagram shows the steps necessary to reach the finalization of a block. Black blocks are \textit{finalized}, dark gray blocks are \textit{notarized}, while light gray blocks are not.  In this example, $n=4$ and $f=1$. Initially, $n-f=3$ replicas send notarization votes (NV) for the rank-0 block of round $k$. Replicas will not send a NV for higher rank proposals, in this round. As soon as the NV are received, the replica combines them in a notarization (N). Then, the replicas that only sent an NV for the rank-0 block will also send a finalization vote (FV) for it. Once the three FV are received, they will be combined into a finalization (F) and both the r-0 block and its ancestor(s) are finalized.} \label{icc_finalization}
\end{figure*}

\textbf{Overview.} The goal of the \ICCal\ protocol is to provide a total ordering (i.e., chain) of blocks among replicas in a network. As the protocol advances, a tree of blocks is constructed, starting from a genesis block that is at the root of the tree. Blocks are added to the block-tree if they are safe to be extended.
Each replica may have a different, partial view of the block-tree.
Replicas make progress, by \textit{finalizing} at most one block per tree height. By traversing 
the tree edges, a path of finalized blocks is established. This path constitutes the blockchain. Honest replicas are guaranteed to have a common prefix of this path.

\textbf{Creating a Block-Tree through Notarization.} The ICC protocol proceeds in \textbf{rounds}. In the k-th round, one or more blocks are added to a block-tree at height $k$. %
To be added to the tree, a block must be \textit{notarized}. In ICC, a block $b$ becomes \textbf{notarized} for a replica, once at least $n-f$ \textit{notarization votes} for block $b$ are received. A \textbf{notarization vote} is a BLS signature sent by a replica $u$ for a block $b$, implying that replica $u$ validated block $b$. Notarization votes can be aggregated to a single multi-signature, that can be efficiently verified~\cite{boneh2018compact}. As soon as block $b$ is validated by at least $n-f$ replicas, it becomes notarized, and can be added to the block-tree.

\textbf{Block Proposal.} In principle, in every round $k$, each replica can propose a block and all notarized blocks make it into the block-tree. Therefore, at any height $k$ there can be up to $n$ blocks. However, this makes it difficult for the replicas to agree on which block should be part of the blockchain, as at most one block per height should be included. To reduce the number of blocks proposed in each round, a \textit{random beacon} is used to generate a random permutation of the $n$ replicas. The permutation defines a different \textbf{rank} $r \in [0,n-1]$ for each replica for that particular round. The replica with the lowest rank, i.e., the \textbf{rank-0 replica}, is the \textbf{leader} of that round. At the beginning of a round, each replica computes its rank and starts a timer. The leader of that round proposes the block immediately, while the other replicas wait for a time directly proportional to their rank before proposing a block. Let $r$ be the rank of a replica in a given round, the replica delays the proposal of its block by a proposal delay $\Delta_{prop}(r) = 2\Delta \times r$. In round $k$, when deciding which block in the block-tree to extend, the replica considers only notarized blocks at height $k-1$.

\textbf{Notarization.} In round $k$, before sending a notarization vote for a block $b$ proposed by a replica $u$ with rank $r^{(u)}$, the replica $v$ receiving the block $b$ waits a notarization delay of $\Delta_{notary}^{(u)} = 2\Delta \times r^{(u)}$ after starting round $k$. Therefore, once replica $v$ receives a block $b$ from replica $u$, it checks if it is time to send a notarization vote for a block of rank $r^{(u)}$. If so, it broadcasts the notarization vote to all other replicas, otherwise, it waits until $\Delta_{notary}^{(u)}$ has passed.
This way, blocks of lower rank should become notarized and added to the block-tree before others. Once a block $b$ in round $k$ becomes notarized for replica $u$, 
replica $u$ broadcasts the notarization to other replicas, stops sending notarization votes for blocks of round $k$, and starts round $k+1$. Thus, when the leader is honest, the network is in a phase of synchrony and the block proposal and notarization delays are set accordingly, only the block proposed by the rank-0 replica will be added to the tree at height $k$. If the leader is not honest or the network is asynchronous, some other replicas of higher rank may also propose blocks, and also have their blocks added to the tree. 

\textbf{Finalization.} As there might be multiple notarized blocks at the same height, we must guarantee that all replicas agree on the same \textit{finalized} block and that all \textit{finalized} blocks are part of the same blockchain. 
ICC uses a mechanism similar to notarization to guarantee that no consistency violation across rounds is possible. If the replica did not broadcast notarization votes for any other block than $b$ in the same round $k$, the replica will also broadcast another BLS-signature for block $b$, called \textbf{finalization vote}. If a replica $u$ receives at least $n-f$ finalization votes for the same block $b$, the replica $u$ can aggregate them into a \textbf{finalization}. At this point, the replica $u$ is guaranteed that all replicas will eventually include block $b$ in the blockchain. Once a finalization is created for a given block $b$, block $b$ is said to be \textbf{explicitly finalized} and can be output. The finalization is then sent to all the other replicas.

The rule for broadcasting finalization votes implies that in some rounds, no block might become finalized by the aggregation of at least $n-f$ finalization votes. However, as soon as a block in a later round is explicitly finalized, all its ancestors in the tree, until the last explicitly finalized block, automatically become \textbf{implicitly finalized}.
An illustration of the steps necessary before reaching finalization is shown in \Cref{icc_finalization}.

We point out that if a block $b$ at height $k$ becomes finalized, no other block at height $k$ can be notarized. Since only notarized blocks are extended (see beginning of \Cref{sec:icc}), this implies that all blocks at height greater than $k$ will have $b$ as an ancestor and thus, all \textit{finalized} blocks are part of the same blockchain. This sketches the safety proof of the \ICCal\ algorithm.

\begin{remark}[Finalization Latency]
In rounds in which the network is synchronous, finalization of a block can be reached in three times the message delivery time  $\delta < \Delta$. In the first round the block proposal, in the second the notarization votes, and in the third the finalization votes are broadcast (see \Cref{fig:communication_rounds}).
\end{remark}

\begin{remark}[Timeouts]
In the simplest implementation of the ICC protocol, we can assume that the communication delay bound $\Delta$ is an explicit parameter. In practice, instead, the protocol is modified to adaptively adjust to an unknown communication delay bound.
\end{remark}

\section{Problem Statement}
The purpose of SMR protocols is to totally order blocks containing an arbitrary payload so that all replicas output the payload in the same order. 

The properties we want the \FICC\ algorithm to satisfy, are first and foremost the same that \ICC\ provides~\cite{camenisch2022}:
\begin{itemize}
    \item \label{itm:deadlock-freeness} \textbf{Deadlock Freeness}: Each round eventually terminates and increases the block-tree height by 1. %
    \item \label{itm:safety} \textbf{Safety}: 
    If some honest replica finalizes block $b$ in round $k$, and another honest replica finalizes block $b'$ in round $k$, then $b = b'$.
    \item \label{itm:liveness} \textbf{Liveness}: If the network is momentarily synchronous and the leader is honest, then the block proposed by the leader is added to the block-tree and finalized.
\end{itemize}

Additionally, \FICC\ satisfies a stronger property:
\begin{itemize}
    \setcounter{enumi}{3}
    \item \label{itm:fast-termination} \textbf{Fast Termination}: If the network is momentarily synchronous, the leader is honest, and $n-p$ replicas behave momentarily honestly, then the block proposed by the leader is added to the block-tree and finalized in a single round trip time. 
\end{itemize}

\begin{remark}
Deadlock freeness is also a \textit{liveness} property. It provides chain growth even in periods of asynchrony, while Liveness guarantees that the entire progress is finalized in periods of synchrony. This distinction underlines the two modes of operation~\cite{camenisch2022}.
\end{remark}

\section{Intuition}

The goal of the \FICC\ fast path is to finalize a block as soon as possible while guaranteeing agreement among replicas. In the \ICCal\ protocol a block is finalized at the fastest three communication steps after being proposed. The fast path added in \FICCal\ reduces the finalization latency down to two communication steps, as shown in \Cref{fig:communication_rounds}.

\begin{definition}
In order to distinguish the two possible finalization scenarios, we call \textbf{Fast Path finalized} (or \textbf{FP-finalized}) the blocks that are explicitly finalized via the \FICCal\ fast path, while we call \textbf{Slow Path finalized} (or \textbf{SP-finalized}) the blocks that are explicitly finalized by receiving at least $n-f$ finalization votes, as in the \ICCal\ protocol.
\end{definition}

The idea behind the fast path is that if ``enough'' replicas send their \textbf{first} notarization votes for the same block $b$, i.e., replicas are in pre-agreement, then $b$ can be immediately finalized without waiting for the finalization votes to be sent and received. In other words, replicas finalize blocks through finalization votes only if there are too many replicas that do not agree on the first block added to the block-tree at a given height, or if the slow path is faster. 

\begin{definition}[Fast Vote]
To determine if enough replicas are in pre-agreement, and try to finalize a block via the fast path, replicas broadcast a \textbf{fast vote} for each round's first block they send a notarization vote for. A block that receives $n-p$ fast votes becomes explicitly finalized via the fast path, and fast votes can be combined into a \textbf{fast finalization}.
\end{definition} 

\textbf{Safety considerations.} As we have sketched in \Cref{sec:icc}, the rules of \ICCal\ ensure that if a block $b$ at height $k$ becomes SP-finalized, no other block at the same height will be notarized. This guarantees that the tree will not contain any block at height $k$ besides block $b$. Therefore, only block $b$ will be extended. In \FICCal\ however, if a block $b$ gets FP-finalized, we cannot guarantee that there will not be another notarized blocks at the same height. 

Hence, we introduce a new concept, similar to notarization. Intuitively, we want to guarantee that whenever a block $b$ at height $k$ gets FP-finalized, then no other block at height $k$ can be extended. To this end, each replica marks blocks as \textit{unlocked} when they are safe to be extended.

As we will show, the conditions of a block being \textit{unlocked} almost never restrict the original \ICCal\ algorithm (see \Cref{remark:cornercases}) and only a few additions are sufficient to guarantee the safety of \FICCal.

\section{The \FICC\ Algorithm}
The \FICCal\ algorithm extends the \ICC\ algorithm and enables fast path finalization in periods of synchrony. The protocol allows explicit finalization to be achieved as soon as $n-p$ fast votes are received.  The fast path runs alongside the \ICC\ protocol and is integrated into existing messages. In case the fast path cannot be used, there is no ``switching cost'' to revert to the slow path. When there is no pre-agreement, blocks will still be notarized and eventually finalized (either explicitly or implicitly) just as in the \ICCal\ protocol.

The following definitions are written from the point of view of a replica $u$, at round $k$:
\begin{definition}
Let $\textsc{blocks}(k)$ be the set of blocks that have been received in round $k$. 
For a block $b \in \textsc{blocks}(k)$, we define $\textsc{supp}(b)$ to be the set of replicas from which $u$ has received a fast vote.
Further, for any set $\mathcal{B} \subseteq \textsc{blocks}(k)$, we define $\mathbf{\textsc{supp}(\mathcal{B})}$ to be the set of \textit{distinct} replicas from which $u$ has received a \textit{fast vote} for a block in $\mathcal{B}$:
\begin{equation*}
    \textsc{supp}(\mathcal{B}) = \bigcup_{b \in \mathcal{B}} \textsc{supp}(b)
\end{equation*}
\end{definition}

\begin{definition}
At any given time, $\textsc{max}(k)$ is a rank-0 block, such that no other rank-0 block has a larger support:
\begin{align*}
    \textsc{max}(k) = \left( \argmax_{b \in \textsc{blocks}(k), b.rank=0} \abs{\textsc{supp}(b)} \right ).pop()
\end{align*}
\end{definition}

\begin{remark}
There can be multiple rank-0 blocks when the leader is Byzantine.
\end{remark}

\begin{definition}
At any given time, $\textsc{nonLeaderBlocks}(k)$ is the set of blocks which $u$ has received with a rank larger than 0:
\begin{equation*}
    \textsc{nonLeaderBlocks}(k) = \{b \in \textsc{blocks}(k) \mid b.rank \neq 0 \}
\end{equation*}

\end{definition}

\begin{definition}
At any given time, $\textsc{nonMaxBlocks}(k)$ is the set of blocks which $u$ has received, excluding a rank-0 block that has the most support: 
\begin{equation*}
    \textsc{nonMaxBlocks}(k) = \{b \in \textsc{blocks}(k) \mid b \neq \textsc{max}(k)\}
\end{equation*}
\end{definition}

Using these definitions, we can pin down exactly what blocks are safe to extend for a replica $u$, in the presence of the fast path. We call such a block \textit{unlocked}.

\begin{definition}[Unlocked Blocks]
\label{def:unlocked}
Finalized blocks are \textbf{unlocked} by definition. 
Additionally, at any point during a round $k$, for a valid block $b \in \textsc{blocks}(k)$:
\begin{enumerate}
    \item if $\abs{\textsc{supp}(b) \cup \textsc{supp}(\textsc{nonLeaderBlocks}(k))} > f + p$, then $b$ is unlocked.
    \item if $\abs{\textsc{supp}(\textsc{nonMaxBlocks}(k))} > f + p$, all current and future blocks of round $k$ are unlocked.
\end{enumerate}
\end{definition}

\noindent
An example of the above definitions is presented in \Cref{fig:example_fastable}.

\begin{figure}[h]
\centering
\includegraphics[width=0.42\textwidth]{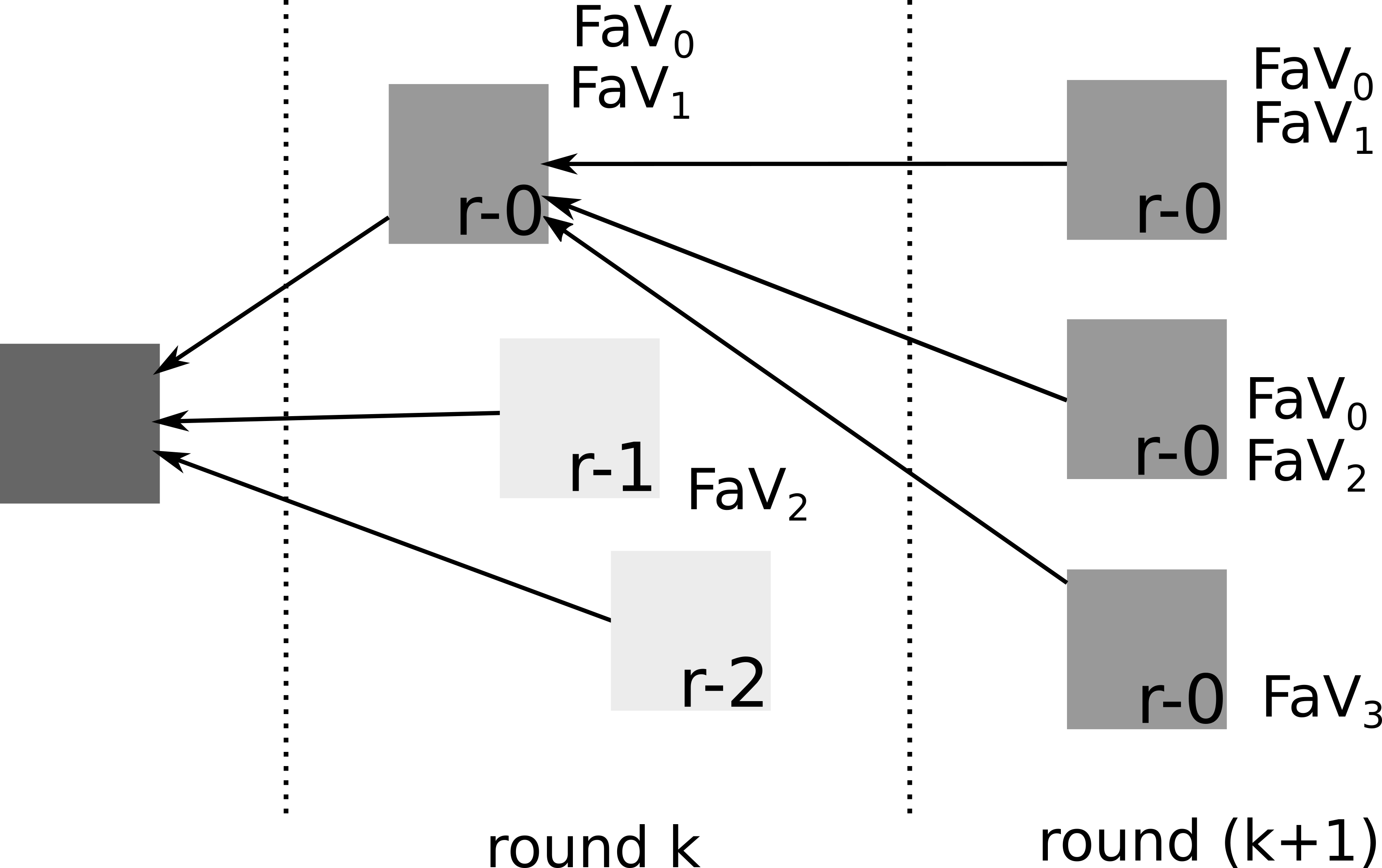}
\caption{Fast votes (FaV) received for each block are shown in the block-tree. Black blocks are \textit{finalized}, dark gray blocks are \textit{unlocked}, while light gray blocks are not. Supposing $n=4$, $f=1$, and $p=1$, for round $k$, Condition 1 is met, and the r-0 block is \textit{unlocked}. Instead, for round $(k+1)$, Condition 2 is met, and all blocks in this round are \textit{unlocked}.} \label{fig:example_fastable}
\end{figure}

\begin{definition}[Unlock Proof]
An \textbf{unlock proof}  is the collection of valid fast votes, that prove that $b$ is unlocked according to \Cref{def:unlocked}. 
\end{definition}
Unlock proofs can be implemented naively by aggregating the fast votes for each block using BLS multi-signatures~\cite{boneh2018compact}. In the worst case, condition 2) of \cref{def:unlocked} might only be met after receiving $2f + 2p + 1$ fast votes, which might attest $f + p + 2$ unique blocks, leaving little improvements from aggregation. %
In practice, however, we expect only a few different blocks to receive fast votes. The description of a more targeted mechanism to achieve small unlock proofs in the worst case is left to future work. %

The \FICCal\ algorithm  is defined by the following changes to the slow path algorithm (\ICCal) presented in \Cref{sec:icc}. Line numbers refer to \Cref{algorithm:ficcal:part1,algorithm:ficcal:part2} that contain the full \FICCal\ pseudocode.

\begin{enumerate}[font=\sffamily, leftmargin=6em]
    \item[Restriction 1] \label{itm:restriction:1} Block proposals can only extend an unlocked block. Similarly, notarization votes, fast votes, and finalization votes are only sent for blocks that extend an unlocked block. (We change the validity condition on \cref{alg:line:validity}.)
    \item[Restriction 2] \label{itm:rule-5} Replicas move to the next round once an unlocked block has been notarized, and they have sent a fast vote (\cref{alg:line:moveifstatement}). 
    \item[Addition 1] \label{itm:rule-6} When replicas move to the next round, an unlock proof of the notarized block is broadcast (\cref{alg:line:proposal}).
    \item[Addition 2] \label{itm:addition:2} When a block is proposed, an unlock proof of the parent block and a fast vote for the current block is broadcast alongside the proposed block (\cref{alg:line:proposal,alg:line:appendfastvote}).
    \item[Addition 3] \label{itm:rule-1} When the first round $k$ notarization vote is broadcast, a fast vote for the same block is broadcast alongside it (\cref{alg:line:bcfastvote}). 
    \item[Addition 4] \label{itm:rule-2} If a rank-0 block has received $n-p$ fast votes, it is considered FP-finalized (\cref{alg:line:finalizationreq}). Fast votes are aggregated into a fast finalization and broadcast (\cref{alg:line:combinefastvotes,alg:line:bcfinalization}).
\end{enumerate}

\begin{remark}
It is possible to omit sending a corresponding notarization vote when a fast vote is sent. A notarization then consists of two multi-signatures, one for notarization and one for fast votes. For the sake of simplicity, we do not consider this version of \FICCal\ in our description and analysis.
\end{remark}

\begin{algorithm}
\begin{algorithmic}[1]
\Implements
    \State Pseudocode for round $k$ at replica $u$
\EndImplements

\Uses
     \Instance{RandomBeacon}{beacon}
     \Instance{BestEffortBraodcast}{broadcast}
\EndUses

\Parameters
    \State $\Delta_{prop}$              \Comment{Proposal delay}
    \State $\Delta_{notary}$            \Comment{Notarization delay}
    \State $k$                          \Comment{Current round number}
    \State $kMax$                       \Comment{Highest round finalized so far}
    \State $payload$                    \Comment{Set of transactions for this round}  
    \State $r_u$                        \Comment{Rank, permutation derived from beacon}
    \State $b_p$                        \Comment{Notarized and unlocked round ($k-1$) block} 
\EndParameters

\UponPure{\ficcal}{Init}{}
    \State $fastVoteSent \gets \false$;    %
    \State $proposed \gets \false$;    \Comment{Proposed a block}
    \State $t_0 \gets clock()$;        \Comment{Time at the start of the round}
    \State $N \gets \emptyset$;        \Comment{Set of blocks for which a notarization vote was sent}
\EndUponPure

\UponCondition{\neg proposed \et clock() \geq t_0 + \Delta_{prop}(r_u)}
    \State \textbf{create} a new round k block:
    \State $b \gets (k, u, hash(b_p), payload, {signature}_u)$ \label{alg:line:test}
    \State $proposed \gets \true$
    \If{$r_u = 0$}
        \State \textbf{broadcast} $b$, $b_p$'s notarization, $b_p$'s unlock proof, fast vote for $b$ \label{alg:line:appendfastvote}%
        \State $fastVoteSent \gets \true$
    \EndIf
    \State \textbf{broadcast} $b$, $b_p$'s notarization, $b_p$'s unlock proof \label{alg:line:proposal}%
\EndUponCondition

\UponExists{\textit{a valid round $k$ block $b$ of rank $r$}}{b \notin N, %
\et clock \geq t_0 + \Delta_{notary}(r) \et \nexists \textit{ valid round $k$ block $b'$ of rank $r'$, $r' < r$} } \label{alg:line:conditionnotarize}
    \If{$r_u \neq r$}
        \State \textbf{broadcast} $b$, $b_p$'s notarization, $b_p$'s unlock proof
    \EndIf
        \State $N \gets N \cup \{b\}$
        \If{$\neg fastVoteSent$}: \label{alg:line:bcfastvote0}
            \State \textbf{broadcast} fast and notarization vote for $b$; \label{alg:line:bcfastvote}
            \State $fastVoteSent \gets \true$; \label{alg:line:bcfastvote2}
        \Else
            \State \textbf{broadcast} notarization vote for $b$
        \EndIf
\EndUponExists

\caption{\FICCal\ (Part 1)}
\label{algorithm:ficcal:part1}
\algstore{ficcal}
\end{algorithmic}
\end{algorithm}

\begin{algorithm}
\begin{algorithmic}[1]
\algrestore{ficcal}

\UponExists{\textit{a valid round $k$ block $b$} \newline}{\textit{$b$ not notarized} \newline
            \et {\ceil{\frac{n+f+1}{2}}} \textit{ notarization votes received}}
    \State \textbf{combine} notarization votes into a notarization for $b$
\EndUponExists

\UponExists{\textit{a valid round $k$ block $b$} \newline}{\textit{$b$ is notarized} \et \textit{$b$ is unlocked} \newline \et fastVoteSent} \label{alg:line:moveifstatement}
    \State \textbf{combine} fast votes into a unlock proof
    \State \textbf{broadcast} notarization and unlock proof for $b$ \label{alg:line:bcfordeadlockfreeness}
    \If{$N \subseteq \{b\}$} \label{alg:line:bcfinalizationvotecheck}
        \State \textbf{broadcast} finalization vote for $b$ \label{alg:line:bcfinalizationvote}
    \EndIf
    \State $k \gets k + 1$  \Comment{Move to next round}
\EndUponExists

\UponExists{\textit{(fast) finalization for a round $k$ block $b$} \newline \textit{ with $k > kMax$} \newline \textbf{\;or\;} \textit{receive } \ceil{\frac{n+f+1}{2}} \textit{ finalization votes} \newline \textbf{\;or\;} \left(\textit{receive } n-p \textit{ fast finalization votes} \newline \textbf{\;and\;} b.rank = 0 \right)\newline}{valid(b) \et \textit{$b$ is not finalized}} \label{alg:line:finalizationreq}
    \State \textbf{combine} (fast) finalization votes into a (fast) \newline finalization for $b$, if necessary \label{alg:line:combinefastvotes}
    \State \textbf{broadcast} (fast) finalization for $b$\label{alg:line:bcfinalization}
    \State \textbf{output} payloads of the last $k-kMax$ blocks in the chain ending at $b$
    \State  $kMax \gets k$
\EndUponExists

\Procedure{valid}{b} \label{alg:line:validity}
\State \textbf{return} $b$ extends a notarized and unlocked round $(k-1)$ block $b_p$ \textbf{and} $b$ is signed correctly \textbf{and} contains a fast vote from the proposer if $b.rank = 0$
\EndProcedure
\caption{\FICCal\ (Part 2)}
\label{algorithm:ficcal:part2}
\end{algorithmic}
\end{algorithm}

\section{Protocol Analysis}
In this section, we present proof sketches that explain the correctness of \FICC. These are not complete proofs but are intended to provide a conceptual understanding of the protocol's security.

\subsection{Deadlock Freeness}
The \ICCal\ protocol guarantees deadlock freeness as each honest replica receives at least one notarized block in each round. Therefore, each honest replica can start the next round and the block-tree keeps growing.
However, in \FICCal\, a block that was notarized is not guaranteed to be extended as, due to \textsf{Restriction 1}, in order for this to happen, the block must also be unlocked. 

The following lemma shows how, regardless, deadlock freeness is not impacted. Intuitively, \textsf{Restriction 2} makes sure that replicas only move to a higher round once a block can be extended, and \textsf{Addition 1} guarantees that each replica does so eventually.

\begin{lemma}
\label{lemma:df:oneisunlocked}
If each honest replica broadcasts a fast vote in round $k$, then \FICCal\ guarantees that each honest replica eventually observes at least one unlocked block.
\end{lemma}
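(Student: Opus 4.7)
The plan is to argue by contradiction. Fix an honest replica $u$ and suppose $u$ never observes any unlocked round-$k$ block, so both conditions of \Cref{def:unlocked} keep failing. By best-effort broadcast between honest parties and the hypothesis that every honest replica broadcasts a fast vote in round $k$, $u$ eventually receives every such vote; call this set of voters $H$, with $|H| \geq n - f \geq 2f + 2p - 1$. Partition $H$ according to the unique block each voter supported: $H_{\max}$ voted for $\textsc{max}(k)$, $H_0^o$ voted for some \emph{other} rank-0 block, and $H_{>0}$ voted for a non-leader block. Because each honest replica casts at most one fast vote per round, these three sets are pairwise disjoint and contribute disjointly to the supports of disjoint block classes, giving $|H_{\max}| + |H_{>0}| \leq |\textsc{supp}(\textsc{max}(k)) \cup \textsc{supp}(\textsc{nonLeaderBlocks}(k))|$ and $|H_0^o| + |H_{>0}| \leq |\textsc{supp}(\textsc{nonMaxBlocks}(k))|$.

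The crucial extra ingredient is the modified \textsc{valid} predicate together with \textsf{Addition 2}: every valid rank-0 block embeds a fast vote from its proposer, so in every rank-0 block that $u$ has accepted, the leader itself is in $\textsc{supp}(\cdot)$. I then split on how many rank-0 blocks $u$ has received. If there are two or more, the leader must be Byzantine and equivocating, and its embedded fast vote appears in $\textsc{supp}(\textsc{max}(k))$ \emph{and} in $\textsc{supp}$ of at least one other rank-0 block, giving a bonus $+1$ on each of the two inequalities above. Combining the assumed failure of both unlock conditions, $|H_{\max}| + |H_{>0}| + 1 \leq f+p$ and $|H_0^o| + |H_{>0}| + 1 \leq f+p$, and adding yields $|H| + |H_{>0}| + 2 \leq 2(f+p)$; plugging in $|H| \geq 2f + 2p - 1$ forces $|H_{>0}| \leq -1$, a contradiction. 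If there is exactly one rank-0 block, then $H_0^o = 0$ and $|H_{\max}| + |H_{>0}| = |H| \geq 2f+2p-1 > f+p$, so condition~1 holds for $\textsc{max}(k)$; if there is none, then $H_{\max} = H_0^o = 0$ and $\textsc{nonMaxBlocks}(k) = \textsc{blocks}(k)$, so $|H_{>0}| = |H| > f+p$ triggers condition~2.

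The main obstacle I anticipate is tightening this accounting so that the strict inequality $> f + p$ (not $\geq f + p$) is actually obtained in the equivocation case. The argument hinges on the embedded proposer fast vote from \textsf{Addition 2}: without it, a Byzantine leader could send one rank-0 block to one subset of honest replicas and another rank-0 block to the rest, splitting $H$ roughly evenly between $H_{\max}$ and $H_0^o$ so that both unlock expressions hover exactly at $f+p$ and neither is ever strictly exceeded. Pinning down that the leader's single identity sits simultaneously inside $\textsc{supp}(\textsc{max}(k))$ and inside $\textsc{supp}$ of every other rank-0 block it produced, and that this contributes the decisive $+1$ on both sides at once, is the delicate piece of the proof.
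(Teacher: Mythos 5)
Your proof is correct and follows essentially the same route as the paper's: a contradiction argument that cases on whether the leader equivocated, uses the proposer fast vote embedded by \textsf{Addition 2} to account for the Byzantine leader's contribution to both supports, and closes with a counting argument over the $n-f \geq 2f+2p-1$ honest fast votes. The only differences are presentational --- you partition the honest voters and sum two failed-condition inequalities to force $|H_{>0}| \leq -1$, where the paper bounds $\abs{\textsc{supp}(\textsc{max}(k))} + \abs{\textsc{supp}(\textsc{nonMaxBlocks}(k))} \geq 2f+2p+1$ and applies the pigeonhole principle.
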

\begin{proof}
Assume towards contradiction that in round $k$ no block is unlocked for an honest replica $u$. 
Consider the set $\mathcal{S}$, defined to be the support for leader blocks different from $\textsc{Max}(k)$, i.e., 
$$\mathcal{S} = {\textsc{supp}(\textsc{nonMaxBlocks}(k) \setminus \textsc{nonLeaderBlocks}(k))}$$

If $\mathcal{S} = \emptyset$, a contradiction is reached immediately, as
\begin{align*}
& \abs{\textsc{supp}(\textsc{max}(k)) \cup \textsc{supp}(\textsc{nonLeaderBlocks}(k))}\\ 
&= \abs{\textsc{supp}(\textsc{blocks}(k))} \geq n - f > f + p    
\end{align*}
and thus by Item 1 of \Cref{def:unlocked} $\textsc{max}(k)$ would be unlocked. (This corresponds to rounds with an honest leader.)

Instead, if $\mathcal{S} \neq \emptyset$, there are at least two rank-0 blocks. By Addition 2 (\Cref{alg:line:appendfastvote}), they each contain a fast vote from the (Byzantine) leader. Since $u$ also receives fast votes from $n-f \geq 2f + 2p - 1$ honest replicas, at least $2f + 2p + 1$ fast votes will be received. This implies that 
\begin{equation*}
\abs{\textsc{supp}(\textsc{max}(k))} + \abs{\textsc{supp}( \textsc{nonMaxBlocks}(k))} \geq 2f + 2p + 1
\end{equation*}
By the pigeonhole principle, either $\abs{\textsc{supp}(\textsc{max}(k))} > f + p $ or $\abs{\textsc{supp}(\textsc{nonMaxBlocks}(k))} > f + p $. In both cases at least one block is unlocked according to \Cref{def:unlocked}, leading to a contradiction. 
\end{proof}

\begin{theorem}
    \FICCal\ satisfies \textbf{deadlock freeness}.
\end{theorem}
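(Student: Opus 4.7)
The plan is to build directly on \Cref{lemma:df:oneisunlocked} and on the deadlock freeness already guaranteed by \ICCal. The overall strategy: show that in every round, (i) every honest replica eventually broadcasts a fast vote, (ii) therefore every honest replica eventually sees some unlocked block, (iii) that unlocked block (or another one) eventually also gets notarized at every honest replica, and (iv) the condition on \cref{alg:line:moveifstatement} then fires, causing the replica to advance to round $k+1$ while broadcasting the unlock proof (\textsf{Addition 1}) so that later-arriving replicas are not stuck either.

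First I would argue step (i). By induction on the round number, assume every honest replica eventually enters round $k$. Each honest replica either proposes a rank-$0$ block (in which case it attaches its own fast vote by \textsf{Addition 2}, \cref{alg:line:appendfastvote}) or, upon seeing the first valid round-$k$ block that meets the timing and rank conditions of \cref{alg:line:conditionnotarize}, broadcasts a fast vote alongside its notarization vote (\textsf{Addition 3}, \cref{alg:line:bcfastvote}). The standard \ICCal\ argument, unchanged here, ensures that every honest replica eventually encounters at least one such valid block (from the lowest-rank proposer whose proposal delay has expired), so the $fastVoteSent$ flag is eventually set at every honest replica.

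Second, step (ii) is exactly \Cref{lemma:df:oneisunlocked} applied to the conclusion of step (i). For step (iii), I would reuse the \ICCal\ deadlock-freeness argument: eventually some round-$k$ block accumulates the $\lceil (n+f+1)/2 \rceil$ notarization votes required to be notarized, because honest replicas keep issuing notarization votes for valid lowest-rank blocks until one is notarized. The only subtle point, and the one I expect to be the main obstacle, is that the block which gets unlocked for $u$ and the block which gets notarized might be different blocks. To handle this, I would observe that the trigger on \cref{alg:line:moveifstatement} only needs the existence of \emph{some} block that is simultaneously notarized and unlocked for $u$; so I need to argue that the set of unlocked blocks at $u$ eventually includes a notarized block. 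This follows because unlocking is monotone (once the support thresholds in \Cref{def:unlocked} are crossed, blocks stay unlocked) and because fast votes keep being collected in parallel with notarization votes, so any block that accumulates enough notarization votes to be notarized will, by the pigeonhole reasoning in \Cref{lemma:df:oneisunlocked}, also be part of an unlocked set at $u$ (either individually via condition 1 or collectively via condition 2).

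Finally, for step (iv), once $u$ observes a notarized and unlocked block with $fastVoteSent = \true$, the guard of \cref{alg:line:moveifstatement} holds and $u$ advances to round $k+1$ after broadcasting the notarization and the unlock proof (\cref{alg:line:bcfordeadlockfreeness}). The broadcast of the unlock proof by the first advancing replica (\textsf{Addition 1}) ensures that slower honest replicas, even if they have not yet independently gathered enough fast votes, can verify unlocking and move on as well. Combining these four steps closes the induction and yields deadlock freeness.
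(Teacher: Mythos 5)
Your proposal follows essentially the same route as the paper: induction on the round number, using \Cref{lemma:df:oneisunlocked} to obtain an unlocked block, and \textsf{Addition 1} to propagate the notarization and unlock proof to slower replicas. One step, however, is argued in the wrong direction and does not hold as stated: you claim that ``any block that accumulates enough notarization votes to be notarized will, by the pigeonhole reasoning in \Cref{lemma:df:oneisunlocked}, also be part of an unlocked set at $u$.'' The lemma only guarantees that \emph{some} block is unlocked (namely $\textsc{max}(k)$ under Condition~1, or all blocks under Condition~2); under Condition~1 a higher-rank block could well become notarized (e.g., after timers expire under a Byzantine leader) while only the rank-$0$ block $\textsc{max}(k)$ is unlocked, so the notarized block need not be unlocked. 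The paper closes this gap in the opposite direction: by \textsf{Restriction 2} honest replicas do not leave round $k$ until an unlocked block is notarized, so they keep issuing notarization votes, and the unlocked block (a rank-$0$ block, for which every honest replica votes immediately, or any block under Condition~2) eventually accumulates the $\lceil (n+f+1)/2 \rceil$ votes needed to be notarized. With that one step repaired, your argument matches the paper's.
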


\begin{proof}
We prove that each round eventually terminates, and that the block-tree of each honest replica keeps growing in height by induction on the block-tree height $k$.
Specifically, we show that for each round $k$, at least one notarized and unlocked block will be added to each honest replica's block-tree. 

\textbf{Base case:} All honest replicas agree on the root of the block-tree, the genesis block, which is also notarized and finalized by definition. By \Cref{def:unlocked} the genesis block is thus unlocked. Thus, in round 0, a notarized and unlocked block is added to each replica's block-tree.

\textbf{Induction step:} Assuming a notarized and unlocked block exists in round $k$, we prove that each replica observes a notarized and unlocked block in round $k+1$. By the induction hypothesis, all replicas will broadcast a fast vote and thus execute the procedure that starts on \cref{alg:line:moveifstatement}. Thus, all honest replicas will enter round $k+1$.

Each honest replica will receive at least one round $k+1$ block, together with a notarization and unlock proof for the extended parent block (\textsf{Addition 2}, \cref{alg:line:proposal}). Thus, each honest replica will send a fast vote for one block (\textsf{Addition 3}, \cref{alg:line:bcfastvote}). Thus, by \Cref{lemma:df:oneisunlocked}, at least one block round $k+1$ block will be unlocked.
As no honest replica moves to a higher round without an unlocked block being notarized (\textsf{Restriction 2}, \cref{alg:line:moveifstatement}), and replicas keep notarizing blocks, one unlocked block will be notarized. Finally, honest replicas moving from one round to the next broadcast the notarization and unlock proof (\textsf{Addition 1}, \cref{alg:line:bcfordeadlockfreeness}), guaranteeing that all honest replicas can add a notarized and unlocked block to their block-tree.
\end{proof}

\begin{remark}
\label{remark:cornercases}
    \FICC\ is at least as fast as \ICCal. Any scenario in which a notarized block for round k exists before it is unlocked can only occur due to message reordering. Hence, \textsf{Restriction 1} and \textsf{Restriction 2} cannot cause latency concessions if the communication channel precludes message reordering (which is the case in practice when TCP/QUIC is used). 
\end{remark}

\subsection{Safety}

\ICCal\ derives its safety from the fact that in rounds with an explicetly finalized block, no other blocks can be notarized. We start our way towards sketching the safety proof of \FICCal\ by showing that an equivalent property holds for \FICCal.

\begin{lemma}
\label{lemma:ic_fin_no_notar}
If a round $k$ block $b$ is \textit{SP-finalized}, then \FICCal\
guarantees that no round $k$ block $b'$, $b'\neq b$ is {notarized} for any replica.
\end{lemma}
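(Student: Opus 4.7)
The statement is exactly the slow-path safety invariant of \ICCal, and the fast path of \FICCal\ does not change any of the relevant mechanisms. My plan is therefore to reuse the standard quorum-intersection argument and verify that the rule restricting when finalization votes are emitted still holds after the modifications introduced in Restrictions~1--2 and Additions~1--4.

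\textbf{Step 1: Locate the invariant enforced by the pseudocode.} The branch in \Cref{algorithm:ficcal:part2} that broadcasts a finalization vote is guarded by the check $N \subseteq \{b\}$ on \cref{alg:line:bcfinalizationvotecheck,alg:line:bcfinalizationvote}, where $N$ is the set of round-$k$ blocks for which the replica has already sent a notarization vote. Consequently, if an honest replica $v$ sends a finalization vote for $b$ in round $k$, then $v$ never sent a notarization vote for any round-$k$ block $b'\neq b$. The fast path additions do not touch $N$ or the finalization-vote rule, so this invariant carries over verbatim from \ICCal.

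\textbf{Step 2: Quorum intersection.} \textit{SP-finalization} of $b$ in round $k$ requires collecting $\lceil (n+f+1)/2\rceil$ finalization votes for $b$, and notarization of any round-$k$ block requires the same threshold of notarization votes. Two such quorums share at least $2\lceil (n+f+1)/2\rceil - n \ge f+1$ replicas, so they contain at least one common honest replica $v$. By Step~1, since $v$ is among the finalization voters for $b$, it never sent a notarization vote for any $b'\neq b$ of round $k$. So $v$ cannot appear among the notarization voters for such a $b'$, contradicting the intersection. Hence no $b'\neq b$ can accumulate a notarization in round $k$.

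\textbf{Step 3: Argue robustness of the numerics.} The only subtlety is checking that the quorum-intersection threshold used for notarization and (plain) finalization is still $\lceil (n+f+1)/2\rceil$ and not lowered for the fast path. Inspecting the pseudocode confirms this: the threshold $n-p$ with $p=p^*$ is used only for \emph{fast} finalization (and only for rank-$0$ blocks), which by hypothesis is not what produced the SP-finalization of $b$. The argument above therefore goes through for any $n \ge 3f + 2p^* - 1$.

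\textbf{Main obstacle.} The lemma itself is essentially an inheritance claim, so the real work is not in the intersection counting but in convincing the reader that none of the FICCal additions break the rule enforced by the $N\subseteq\{b\}$ guard, and that fast finalizations do not silently bypass that guard. Once that bookkeeping is spelled out, the conclusion follows from the classical one-honest-replica argument.
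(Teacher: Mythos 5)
Your proof is correct and follows essentially the same route as the paper: both hinge on the observation that the $N \subseteq \{b\}$ guard prevents an honest replica from both finalization-voting for $b$ and notarization-voting for any $b' \neq b$, and then apply the standard quorum-intersection count (the paper phrases it as "the two quorums would need $\geq n-f+1$ honest replicas," while you compute the intersection size $2\lceil(n+f+1)/2\rceil - n \geq f+1$ directly — these are the same argument). Your Step 3, confirming that the fast path does not lower the notarization/finalization thresholds, is a sensible extra check that the paper leaves implicit.
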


\begin{proof}
Let block $b$ be SP-finalized by some replica, and let block $b'$, $b' \neq b$ be notarized. A replica must have received a quorum of at least $\ceil{\frac{n+f+1}{2}}$ finalization votes for $b$, $\ceil{\frac{n-f+1}{2}}$ of which are from honest replicas. 
Moreover, some replica must have received a quorum of $\ceil{\frac{n+f+1}{2}}$ notarization votes for $b'$,  $\ceil{\frac{n-f+1}{2}}$ of which are from honest replicas.
The two quorums must be disjoint, since each honest replica sends a finalization vote for $b$ only if it did not send any notarization vote for other blocks (in particular $b'$) at height $k$ (\cref{alg:line:bcfinalizationvotecheck}). 
Thus, there must be at least $\ceil{\frac{n-f+1}{2}} + \ceil{\frac{n-f+1}{2}} \geq n-f+1$ honest replicas. By definition, there are only $n-f$ honest replicas, hence one honest replica must occur in both quorums, a contradiction.
\end{proof}

Therefore, any other block at height $k$ will be neither extended nor finalized (as both cases require it to be at least notarized). Similarly, let $b$ be a block at height $k$ which is FP-finalized, we prove that $b$ is the only unlocked block at height $k$. 

\begin{lemma} 
\label{lemma:if_fp_fin_no_other_good}
If a round $k$ block $b$ is \textit{FP-finalized}, then \FICCal\ guarantees that no round $k$ block $b'$, $b'\neq b$ is unlocked for any replica.
\end{lemma}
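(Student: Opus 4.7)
My approach is to rule out each of the two unlocking conditions of Definition~\ref{def:unlocked} for every $b' \neq b$ at every replica $r$. The foundation is a single uniform counting bound. Since FP-finalization of $b$ requires $n-p$ fast votes for $b$, at least $n-p-f$ honest replicas have already committed their unique round-$k$ fast vote (by Addition~3) to $b$. Hence at most $(n-f)-(n-p-f)=p$ honest replicas and at most $f$ Byzantine replicas can ever be observed fast-voting for a block other than $b$. Therefore, at every replica $r$ and every moment, the set of distinct replicas seen fast-voting for any block $\neq b$ has cardinality at most $f + p$.

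For condition~1 applied to $b'$, both $\textsc{supp}(b')$ and $\textsc{supp}(\textsc{nonLeaderBlocks}(k))$ only record fast votes for blocks distinct from $b$ (the latter because $b$ has rank~$0$). Their union therefore lies inside the non-$b$ pool of size $\leq f+p$, failing the strict inequality, so condition~1 cannot unlock $b'$.

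For condition~2, I would case-split on $\textsc{max}(k)_r$. If $\textsc{max}(k)_r = b$, then $\textsc{nonMaxBlocks}(k)$ excludes $b$ and $|\textsc{supp}(\textsc{nonMaxBlocks}(k))_r|$ is again bounded by the non-$b$ pool, so condition~2 fails. If $\textsc{max}(k)_r = b^* \neq b$, two valid rank-$0$ blocks have been accepted, which by Addition~2 forces the (necessarily Byzantine) leader to fast-vote for both; the argmax property then gives $|\textsc{supp}(b)_r| \leq |\textsc{supp}(b^*)_r| \leq f+p$. The main obstacle lies in this equivocation subcase: $\textsc{supp}(\textsc{nonMaxBlocks}(k))_r$ now also contains the $b$-voters in $r$'s view, so a naive union bound only yields $2(f+p)$. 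The sharp bound must come from careful inclusion-exclusion exploiting the overlap forced by Addition~2 (the leader lies simultaneously in $\textsc{supp}(b)_r$ and $\textsc{supp}(b^*)_r$, pulling it out of the non-$b^*$ count), combined with the global cap of $p$ on honest non-$b$ voters, to attribute essentially all residual $b$-voters in $\textsc{supp}(\textsc{nonMaxBlocks}(k))_r$ to replicas already charged to $\textsc{supp}(b^*)_r = \textsc{supp}(\textsc{max}(k))_r$. Closing this accounting to $f+p$ is the heart of the argument, after which the conjunction of the two negative results yields the lemma.
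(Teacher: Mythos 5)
Your opening counting bound and your disposal of Condition~1 coincide with the paper's argument: since FP\mbox{-}finalization requires $n-p$ fast votes for $b$ and each honest replica fast-votes at most once per round, at least $n-p-f$ honest replicas are pinned to $b$, so at most $f+p$ distinct replicas ever support a block other than $b$; because $b$ has rank~$0$, the union in Condition~1 lies entirely in that pool. Where you diverge is Condition~2, and this is where your proof is not actually complete. The paper does not case-split on $\textsc{max}(k)$ at all: it deduces from $\abs{\textsc{supp}(\textsc{nonMaxBlocks}(k))} > f+p$ that more than $p$ \emph{honest} replicas support a non-max block, hence $\abs{\textsc{supp}(\textsc{max}(k))} < n-p$, and since the FP-finalized $b$ is itself a rank-$0$ block, $\abs{\textsc{supp}(b)} \leq \abs{\textsc{supp}(\textsc{max}(k))} < n-p$, contradicting the $n-p$ fast votes needed. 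In the view of the replica that collects those $n-p$ votes, $\textsc{max}(k)=b$ is forced, so the subcase you identify as ``the heart of the argument'' simply does not arise there.

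The repair you sketch for that subcase, however, would fail. Honest replicas that fast-voted for $b$ need not appear in $\textsc{supp}(b^*)$ in the other replica's view, so they cannot be ``charged'' to $\textsc{supp}(\textsc{max}(k))$ by any inclusion-exclusion. Concretely, with $n=7$, $f=2$, $p=1$ (so $f+p=3$, $n-p=6$): let Byzantine $z_1$ (the leader) and $z_2$ fast-vote for $b$, $b^*$, and a rank-$1$ block $c$, let honest $h_3$ fast-vote for $b^*$, and the remaining honest replicas for $b$. One replica receives six fast votes for $b$ and FP-finalizes it, while another replica sees only $\{h_1,h_2\}$ supporting $b$, $\{h_3,z_1,z_2\}$ supporting $b^*$ (so $\textsc{max}(k)=b^*$), and $\{z_1,z_2\}$ supporting $c$; then $\textsc{supp}(\textsc{nonMaxBlocks}(k)) = \{h_1,h_2,z_1,z_2\}$ has size $4 > f+p$ even though only $f+p$ replicas ever voted outside $b$. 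So your accounting cannot be closed to $f+p$ in that subcase; the argument has to go through the paper's comparison of $\abs{\textsc{supp}(\textsc{max}(k))}$ against the $n-p$ threshold within a single replica's view (and note that the paper's own sketch quietly elides the cross-replica quantification in the lemma statement, which is exactly the configuration your stuck case, and this example, exhibit).
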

\begin{proof}

Assume towards contradiction that there exists an FP-finalized block $b$ and an unlocked block $b'$, $b' \neq b$ at height $k$. 
For $b'$ to be unlocked, either Condition 1 or Condition 2 of \Cref{def:unlocked} must be satisfied.

If Condition 1 is satisfied, then $$\abs{\textsc{supp}(b') \cup \textsc{supp}(\textsc{nonLeaderBlocks}(k))} > f + p$$ This implies that blocks different from $b$ have received more than $f+p$ fast votes, more than $p$ of which coming from honest replicas. By \cref{alg:line:bcfastvote0,alg:line:bcfastvote,alg:line:bcfastvote2}, each honest replica broadcasts at most one fast vote per round. Thus $b$ received less than $n-p$ fast votes, a contradiction, as $n-p$ fast votes are required for FP-finalization (\cref{alg:line:finalizationreq}).

If Condition 2 is satisfied, then $$\abs{\textsc{supp}(\textsc{nonMaxBlocks}(k))} > f + p$$ It follows that $\abs{\textsc{supp}(\textsc{max}(k))} < n - p$. This implies that no block can be FP-finalized, a contradiction. 
\end{proof}

\begin{theorem}
\label{thm:ficc_safety}
    \FICCal\ satisfies \textbf{safety}.
\end{theorem}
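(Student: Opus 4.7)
The plan is to reduce the theorem to three cases for explicit finalization (classified by whether each of $b, b'$ was SP- or FP-finalized) and then extend to implicit finalizations via a chain-extension argument.

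For explicit finalization at round $k$, three cases arise. If both $b$ and $b'$ are SP-finalized, \Cref{lemma:ic_fin_no_notar} applied to $b$ forbids any other notarized round-$k$ block, and explicit SP-finalization requires notarization, so $b' = b$. The case where both are FP-finalized reduces to a quorum-intersection argument on fast votes: two distinct blocks each accumulating $n - p$ fast votes would force at least $n - 2p$ replicas to fast-vote twice, while honest replicas fast-vote at most once per round (\cref{alg:line:bcfastvote2}); since $n \geq 3f + 2p - 1$ and $f \geq 1$, this exceeds the Byzantine budget. The mixed SP+FP case requires a separate disjointness argument: an honest finalization-voter for $b$ (precondition $N \subseteq \{b\}$ on \cref{alg:line:bcfinalizationvotecheck}) did not send a notarization vote for $b'$ and therefore did not send a fast vote for $b'$. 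Counting the $\ceil{\frac{n-f+1}{2}}$ honest finalizers of $b$ and the $n - p - f$ honest fast-voters of $b'$ as disjoint sets contradicts the total $n - f$ honest replicas when $n \geq 3f + 2p - 1$ and $f \geq 1$.

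For implicit finalization, I will show by induction on $k' > k$ that if $b$ is explicitly finalized at round $k$, every valid round-$k'$ block descends from $b$. The base step asserts that no round-$k$ block $b'' \neq b$ can be extended: if $b$ is SP-finalized, \Cref{lemma:ic_fin_no_notar} bars other notarizations; if $b$ is FP-finalized, \Cref{lemma:if_fp_fin_no_other_good} bars other unlocked blocks; either way, \textsf{Restriction 1} prevents $b''$ from being extended. The inductive step propagates the invariant using the same validity criterion. Hence any implicit finalization of $b'$ at round $k$ must arise from a later explicit finalization whose chain passes uniquely through $b$, giving $b' = b$.

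The main obstacle I anticipate is the SP+FP case: unlike in pure \ICCal, an FP-finalized block does not obviously carry a full notarization quorum, since Byzantine fast-voters need not emit notarization votes, so \Cref{lemma:ic_fin_no_notar} cannot be invoked directly. Overcoming this requires the bespoke disjointness argument sketched above, which leans crucially on the coupling between fast and notarization votes in honest replicas (\cref{alg:line:bcfastvote}) and on the $N \subseteq \{b\}$ precondition for finalization voting. A secondary subtlety is that in the chain-extension step, \Cref{lemma:if_fp_fin_no_other_good}'s conclusion about unlocked blocks, rather than merely notarized ones, is essential, because notarized-but-locked round-$k$ blocks can coexist with an FP-finalized $b$ without violating safety.
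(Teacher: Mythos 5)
Your proposal is correct and follows essentially the same route as the paper: the same split into explicit versus implicit finalization, the same reliance on \Cref{lemma:ic_fin_no_notar} and \Cref{lemma:if_fp_fin_no_other_good} for the implicit case, and the same quorum-intersection counting for the explicit cases. The one place you go beyond the paper is the mixed SP/FP explicit case, where the paper's one-line ``two respective Byzantine quorums'' remark silently relies on exactly the coupling you make explicit --- that an honest finalization-voter's $N \subseteq \{b\}$ precondition forces its unique fast vote to have been cast for $b$ --- so your bespoke disjointness argument is a welcome elaboration of the paper's argument rather than a deviation from it.
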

\begin{proof}
Towards contradiction, assume an honest replica finalizes round $k$ block $b$, while another honest replica finalizes round $k$ block $b'$, $b' \neq b$. Without loss of generality, we assume $b$ was explicitly finalized. 

Consider the case where $b$ and $b'$ were both explicitly finalized. SP-and FP-finalization imply the reception of $\ceil{\frac{n+f+1}{2}}$ unique notarization and fast votes for a block, respectively. This implies the existence of two respective Byzantine quorums, a contradiction~\cite{textbook}.
Thus, without loss of generality, assume $b'$ was implicitly finalized. If $b$ was SP-finalized, by \Cref{lemma:ic_fin_no_notar} $b'$ cannot have been notarized, and thus no honest replica would have extended $b'$.
Instead, if $b$ was FP-finalized, by \Cref{lemma:if_fp_fin_no_other_good} then $b'$ was not unlocked. Thus, honest replicas would have sent fast, notarization, or finalization votes for a block that was not unlocked, a contradiction (see \cref{alg:line:conditionnotarize} and the validity definition on \cref{alg:line:validity}). 
\end{proof}

\subsection{Liveness}
\label{sec:liveness}
We define liveness such that whenever the network is synchronous for a “long enough” interval, if the leader is honest, only the leader’s block will be added to the block-tree for the corresponding round and this block will be finalized by all honest replicas.

\begin{theorem}   
    \FICCal\ satisfies \textbf{liveness}.
\end{theorem}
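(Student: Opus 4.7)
The plan is to show that during a sufficiently long synchronous interval containing round $k$ with an honest leader $L$, every honest replica SP-finalizes the leader's block $b^*$ and no other round-$k$ block is ever notarized. By deadlock freeness every honest replica eventually enters round $k$; the honest leader $L$, upon entering round $k$ at some time $t_0$ inside the synchronous window, immediately broadcasts $b^*$ together with the parent's notarization, an unlock proof for the parent, and a fast vote for $b^*$ (Addition 2). I will track the round in $O(\delta)$ steps measured from $t_0$.

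Within $\delta$, every honest replica that is (or becomes) in round $k$ has $b^*$ available and, per Addition 3, broadcasts both a fast vote and a notarization vote for $b^*$. The next step is to conclude that $b^*$ becomes both unlocked and notarized at every honest replica. Since $L$ is honest, $b^*$ is the only round-$k$ block of rank $0$, so all $n-f$ honest fast votes accrue to $b^*$ alone. Consequently $|\textsc{supp}(b^*) \cup \textsc{supp}(\textsc{nonLeaderBlocks}(k))| \geq n - f \geq 2f + 2p - 1 > f + p$, triggering Condition 1 of \Cref{def:unlocked}; simultaneously, the same $n-f \geq \lceil (n+f+1)/2 \rceil$ notarization votes form a notarization. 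Thus within roughly $2\delta$ of $t_0$, every honest replica holds an unlocked and notarized $b^*$.

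The third step is to exclude competing round-$k$ blocks. The notarization-delay guard on \cref{alg:line:conditionnotarize} prevents any honest replica from casting a notarization vote on a rank-$r \geq 1$ block before $2\Delta$ has elapsed since that replica entered round $k$. Because $2\delta < 2\Delta$, the conditions of Restriction 2 (\cref{alg:line:moveifstatement})---unlocked and notarized round-$k$ block, plus a fast vote already sent---fire at every honest replica before any rank-$\geq 1$ block becomes eligible for notarization voting. Hence each honest replica advances to round $k+1$ with $N = \{b^*\}$, the guard on \cref{alg:line:bcfinalizationvotecheck} passes, and a finalization vote for $b^*$ is broadcast. One further $\delta$ later, $n-f \geq \lceil (n+f+1)/2 \rceil$ finalization votes let every honest replica SP-finalize $b^*$; \Cref{lemma:ic_fin_no_notar} then rules out notarization of any other round-$k$ block.

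The main obstacle is the timing bookkeeping: aligning the various honest replicas' local round-start times, verifying that $2\Delta$ dominates all the $O(\delta)$ quantities by a large enough margin to fence off every rank-$\geq 1$ block, and quantifying how long the synchronous window must be to straddle the entry into round $k$ and the three communication steps that follow. This is essentially the timing argument already used for \ICCal\ liveness; the \FICCal\ additions only layer the unlock logic on top, and under an honest leader that layer is satisfied trivially by the first two steps above.
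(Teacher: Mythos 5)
Your proposal is correct and follows essentially the same route as the paper's proof: the honest leader's block is the unique rank-0 proposal, so within the notarization-delay window every honest replica fast-votes and notarization-votes for it (making it unlocked and notarized), no higher-rank block collects notarization votes, hence $N=\{b\}$ everywhere, finalization votes follow, and the block is SP-finalized one communication step later. The only cosmetic difference is that the paper anchors the timing at the moment the \emph{first} honest replica enters round $k$ (which makes the $\Delta_{notary}(1)\geq 2\delta$ comparison immediate for every replica), whereas you anchor at the leader's entry and defer that alignment to the ``timing bookkeeping'' you flag at the end.
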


\begin{proof}

Let $T$ be the time when the first honest replica $u$ enters round $k$. Suppose that the leader $l$ of round $k$ is honest and proposes block $b$. Moreover, suppose that the network is $\delta$-synchronous between time $T$ and $T+2\delta$.
Furthermore, assume $\delta \leq \Delta$, which implies that $\Delta_{notary}(1) \geq 2 \delta $, with the functions $\Delta_{notary}$ and $\Delta_{prop}$ as defined in \Cref{sec:icc}. 
We want to prove that under these assumptions, eventually, each honest replica will finalize block $b$.

Before the honest replica $u$ enters round $k$ at time $T$, the replica $u$ notarized an unlocked block in rounds $1, \dots, k-1$ and broadcast the notarization and unlock proof (\textsf{Addition 1}, \cref{alg:line:bcfordeadlockfreeness}). By the synchrony assumption, the other replicas, $l$ included, receive the notarization and unlock proof for rounds $1, \dots, k-1$ by time $T+\delta$, and enter round $k$. 

Note that $\Delta_{prop}(0) = 0$. Thus, replica $l$ broadcasts $b$ by time $T+\delta$, and other replicas will receive it by $T+2\delta$. By assumption, no other honest replica enters round $k$ before time $T$ and as $\Delta_{notary}(1) \geq 2 \delta $, every honest replica will broadcast a fast vote and a notarization vote for $b$ (\textsf{Addition 3}, \cref{alg:line:bcfastvote}). Therefore, all honest replicas eventually receive at least $n-f$ fast votes for $b$. When $p=f$, these fast votes are enough to FP-finalize $b$ (\textsf{Addition 4}, \cref{alg:line:finalizationreq}). When $p<f$, $\ceil{\frac{n+f+1}{2}} \leq n-f$ notarization votes are sufficient to notarize $b$ instead. As argued previously, honest replicas will not have notarized another block and thus will broadcast a finalization vote (\cref{alg:line:bcfinalizationvote}). Eventually, after an additional communication round, honest replicas will receive $\ceil{\frac{n+f+1}{2}} \leq n-f$ finalization votes, which leads to SP-finalization (\cref{alg:line:finalizationreq}).
\end{proof}

\subsection{Fast Termination}

\begin{theorem}
    \FICCal\ satisfies \textbf{fast termination}.
\end{theorem}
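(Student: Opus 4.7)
The plan is to mirror the liveness proof above and tighten its timing analysis to show the fast path fires exactly two network delays after the proposal. Let $T$ be the time the first honest replica enters round $k$, and suppose the network is $\delta$-synchronous over $[T, T+3\delta]$, the leader $l$ is honest, and at least $n-p$ replicas behave honestly throughout. As in the liveness proof, \textsf{Addition 1} combined with synchrony ensures that all honest replicas, including $l$, enter round $k$ by time $T+\delta$ with a notarized and unlocked round-$(k-1)$ block in hand.

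Since $\Delta_{prop}(0) = \Delta_{notary}(0) = 0$, the leader broadcasts its proposal $b$ together with an unlock proof for $b$'s parent and its own fast vote (\textsf{Addition 2}, \cref{alg:line:appendfastvote}) no later than $T+\delta$; by synchrony every honest replica receives it by $T+2\delta$, validates it at once via procedure \emph{valid} on \cref{alg:line:validity}, and, by \textsf{Addition 3}, immediately broadcasts its own fast vote (\cref{alg:line:bcfastvote}). Since at least $n-p$ replicas follow the protocol, each honest replica collects $n-p$ fast votes for $b$ by $T+3\delta$, and the condition on \cref{alg:line:finalizationreq} triggers FP-finalization: the fast votes are aggregated into a fast finalization (\cref{alg:line:combinefastvotes}) and $b$'s payload is output. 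Measured from the proposal time $T+\delta$, finalization occurs at $T+3\delta$, i.e., after the claimed single round-trip time of $2\delta$.

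The main obstacle I foresee is ensuring that every momentarily honest replica actually treats $b$ as valid within one network delay, so it casts its fast vote in time to be counted in the quorum. The hinge is \textsf{Addition 2}: because $l$ is honest and in round $k$, \textsf{Restriction 1} forces $l$ to extend an unlocked parent, and $l$ piggybacks the parent's notarization and unlock proof, together with its own fast vote on $b$, onto the proposal. Hence every recipient can locally discharge \emph{valid} on arrival of $b$ alone, without waiting for separate evidence of the parent's status or for a separate leader fast vote. Without this piggybacking, a lagging but momentarily honest replica could fail to validate $b$ within $\delta$ and thereby break the $n-p$ fast-vote quorum, so this is precisely the step where the argument genuinely depends on the \FICCal\ additions rather than on pre-existing \ICCal\ mechanics.
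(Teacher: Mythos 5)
Your proof is correct and follows essentially the same route as the paper: the paper's own argument is a two-sentence extension of the liveness proof, observing that $n-p$ honestly behaving replicas yield $n-p$ fast votes for the rank-0 block and hence FP-finalization one round trip after the proposal. Your version merely makes the timing explicit and spells out why \textsf{Addition 2} lets recipients validate and fast-vote immediately, which is a faithful elaboration rather than a different approach.
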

\begin{proof}
With the additional assumption that $n-p$ replicas behave honestly during a phase of synchrony, we can guarantee $n-p$ fast votes for the rank-0 block reaching every replica. Thus, after a single round-trip time, each replica will be able to FP-finalize the rank-0 block.
\end{proof}

\section{Evaluation}
\label{sec:evaluation}

\subsection{Implementation}
\label{sec:evaluation:implementation}

We implement \FICCal\ in Golang, utilizing the Bamboo BFT framework, which is designed for prototyping and evaluating chained BFT algorithms~\cite{gai2021bamboo}. This framework already supports the two well-known protocols, Streamlet~\cite{chan2020streamlet} and HotStuff~\cite{yin2019hotstuff}. 

To enhance the performance and reliability of the framework, we introduced a few modifications. For instance, by using non-blocking message-passing channels internally, and by forwarding blocks that extend the tip of the chain, we drastically improve the performance of all algorithms implemented with Bamboo. The impact of the latter change is especially surprising to us, as originally only Streamlet pursued this strategy in the provided implementation, leading to an unexpected advantage over all other protocols, both in terms of throughput and latency. Henceforth, we have been careful to treat all protocols equally.

To further increase predictability and transparency of the protocol operations, we have replaced the random beacon leader election with a round-robin rotation. 
Our code and benchmarking scripts are made openly available~\cite{yannvon2024gitrepo}.

\subsection{Methodology}
\label{sec:evaluation:methodology}

\begin{figure}[]
\centering
\includegraphics[width=0.45\textwidth]{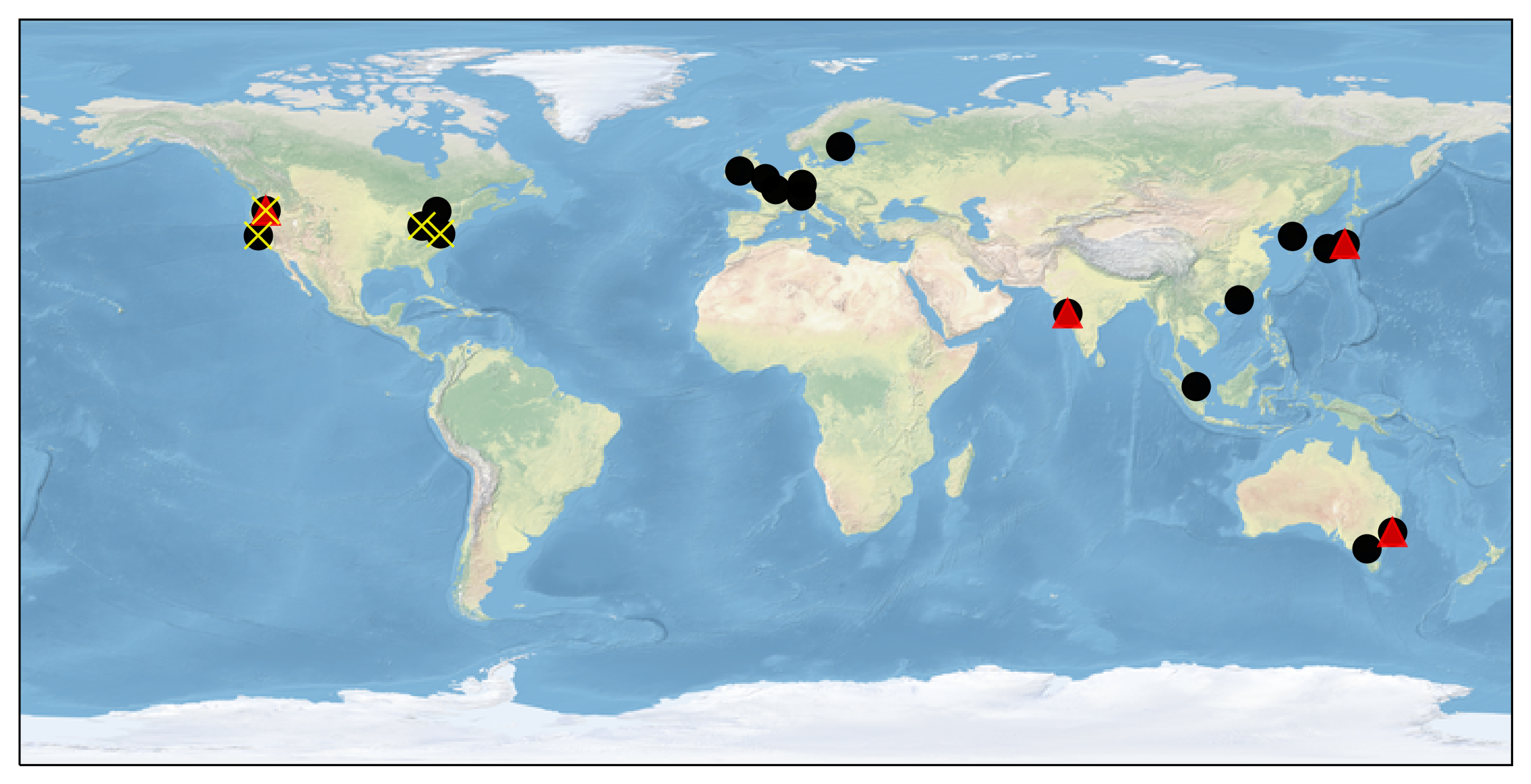}
\caption{Locations of the AWS datacenters used in our testbed. Red triangles represent the 4 datacenters employed in \Cref{exp:equi}, yellow crosses indicate the 4 US datacenters used in \Cref{exp:byz}, and black dots denote the 19 worldwide datacenters used in \Cref{exp:wan}.}
\label{fig:map}
\end{figure}

Application subnets of the Internet Computer feature 13 replicas~\cite{icp_subnets}. To test the potential of \FICC\ as a drop-in replacement for ICC, and since $n=19$ is optimal for both $f=6$, $p=1$ and $f=4$, $p=4$ experimental scenarios, we perform experiments with up to 19 replicas around the globe (see \Cref{fig:map}). Replicas run on a testbed comprised of AWS \texttt{t3.large} EC2 instances (with 2vCPUs, 8 GB of memory, running Ubuntu 22.04).

We wish to demonstrate that (i) requiring only two rounds of communication instead of three has a positive impact and does not increase variance, while (ii) not suffering any more throughput or latency degradation than ICC under crash-faults. Finally, we want to show (iii) that \FICCal\ is performing consistently as well as or better than \ICCal\ and other state-of-the-art protocols in a global network topology.

As the goal of our evaluation is to measure the precise impact of our changes, we define latency as the average proposal finalization time, measured at the respective proposer using their system clocks. Throughput is calculated as the average number of committed bytes per second at any (non-faulty) replica. 
We measure the protocols' behavior under varying load by controlling the size of the payload, where the payload is a bit vector randomly generated by the leader.

As required by the protocol, we set the proposal ($\Delta_{prop}$) and notarization ($\Delta_{notary}$) delays for \FICCal\ and \ICCal\ to be larger than the message delay experienced without network disruptions, such that our experimental results correspond to regular network conditions and only one block is proposed per round if there are no faults. We proceed in the same way to set appropriate timeouts for HotStuff and Streamlet. Each experiment is run for 120 seconds, which is shown to be sufficient, as the measurements show remarkable regularity (e.g., see \Cref{fig:n4:variance}).

\begin{figure*}
\begin{subfigure}[h]{0.33\textwidth}
\includegraphics[width=\textwidth]{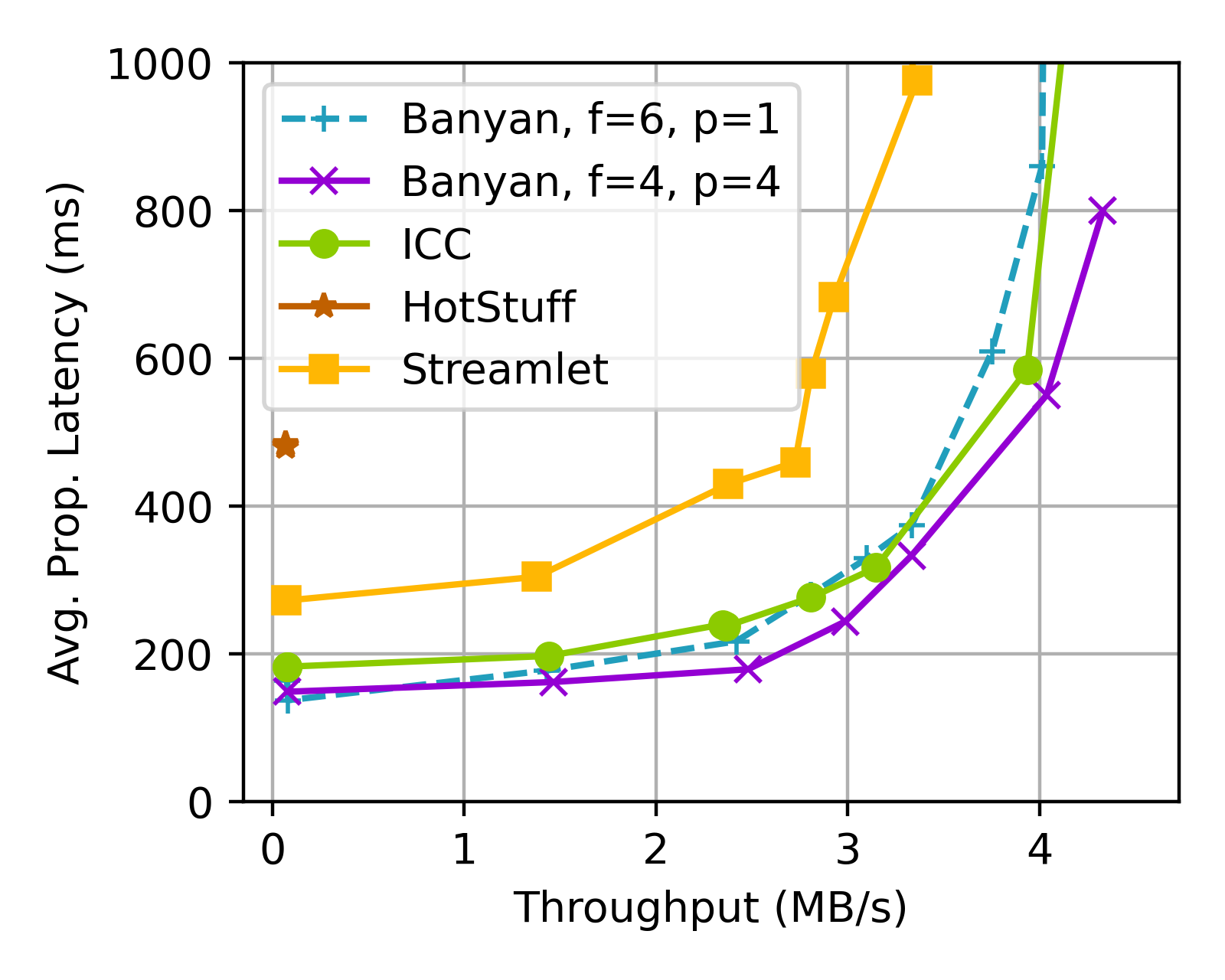}
\caption{Throughput vs. proposal latency for n=19 replicas spread across 4 global datacenters.}
\label{fig:n4_19:prop}
\end{subfigure}
\begin{subfigure}[h]{0.33\textwidth}
\includegraphics[width=\textwidth]{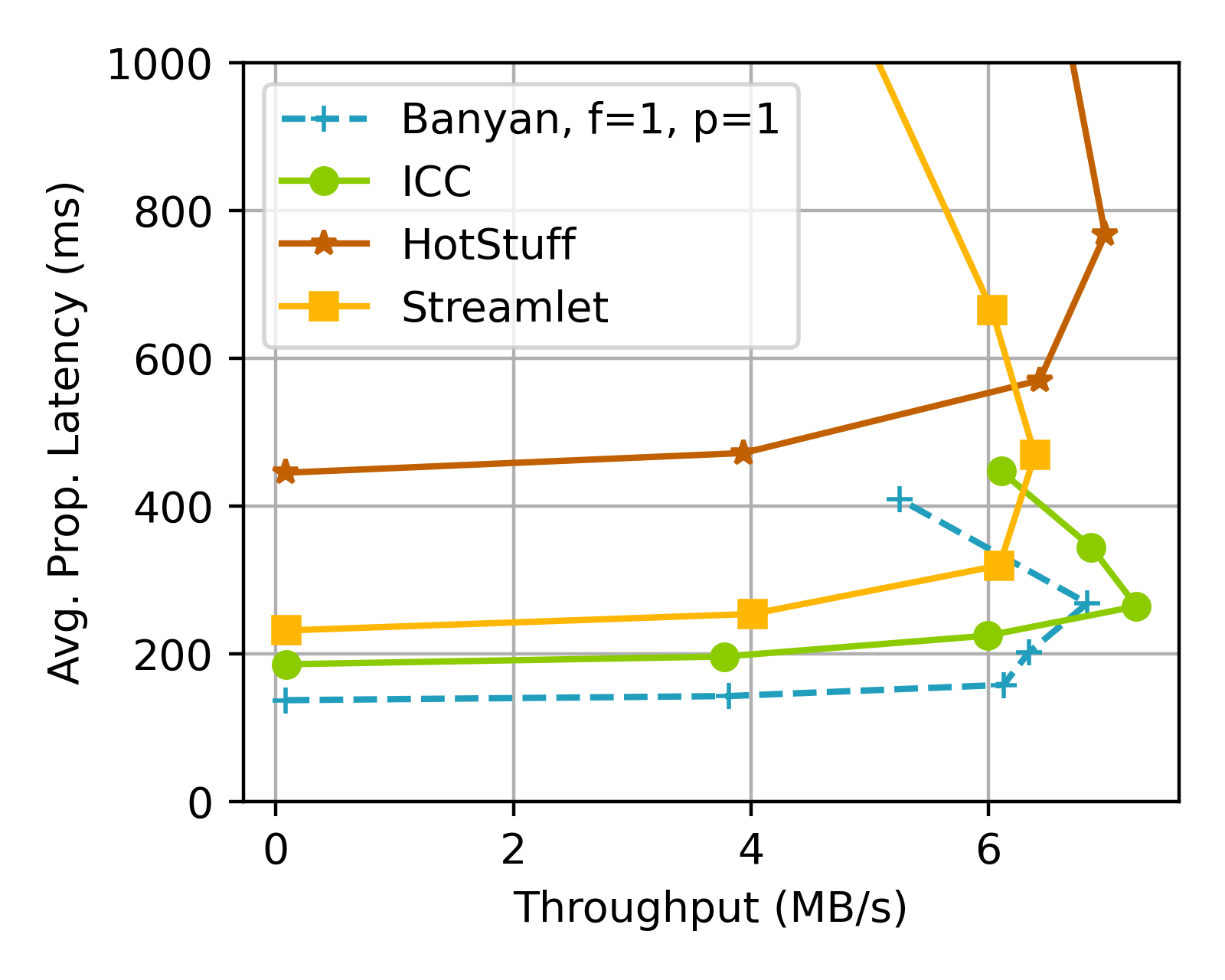}
\caption{Throughput vs. proposal latency for n=4 replicas spread across 4 global datacenters.}
\label{fig:n4:prop}
\end{subfigure}
\begin{subfigure}[h]{0.33\textwidth}
\includegraphics[width=\textwidth]{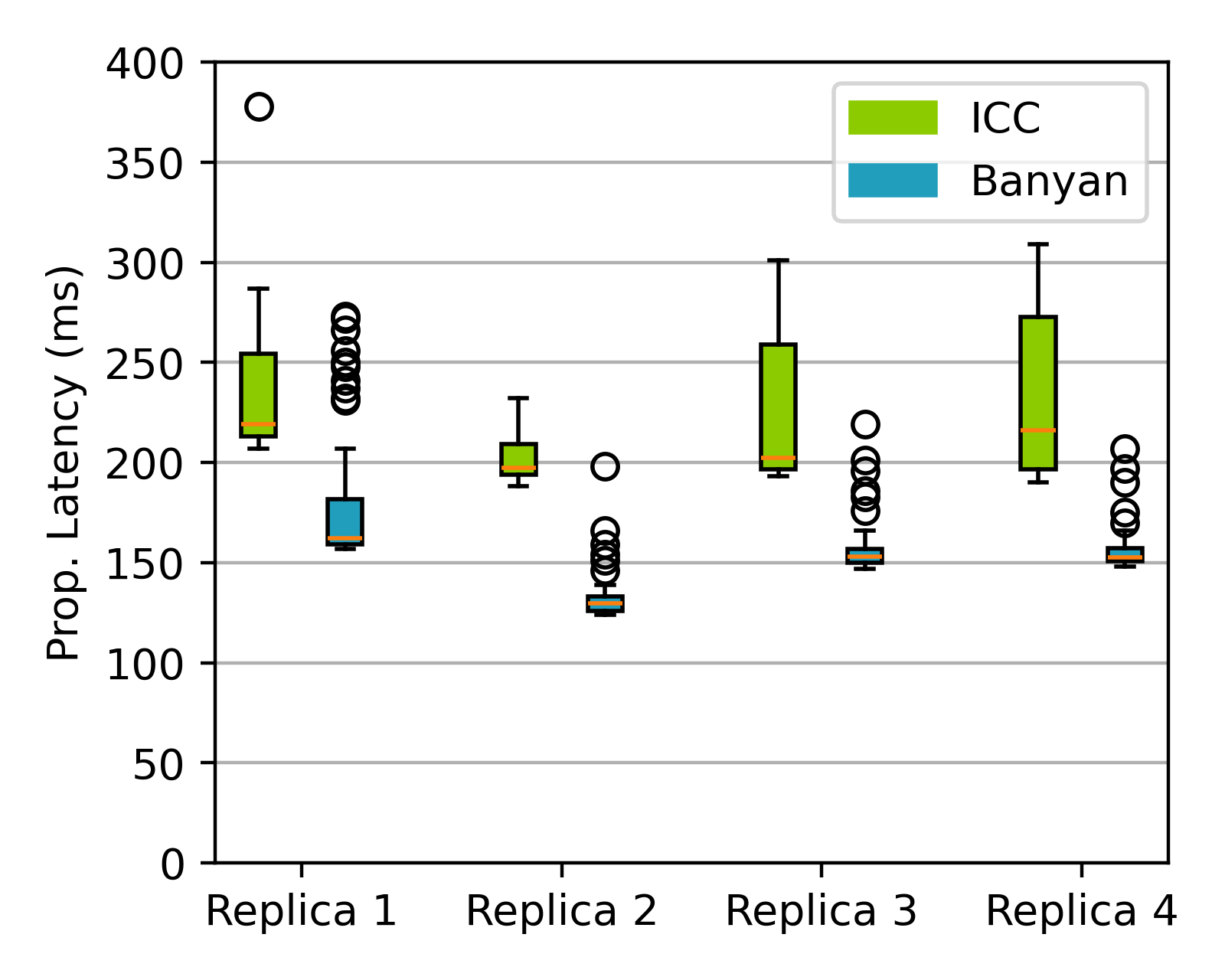}
\caption{Variance of \FICC\ and ICC proposal latencies compared with 1 MB payload and n=4.}
\label{fig:n4:variance}
\end{subfigure}
\begin{subfigure}[h]{0.33\textwidth}
\includegraphics[width=\textwidth]{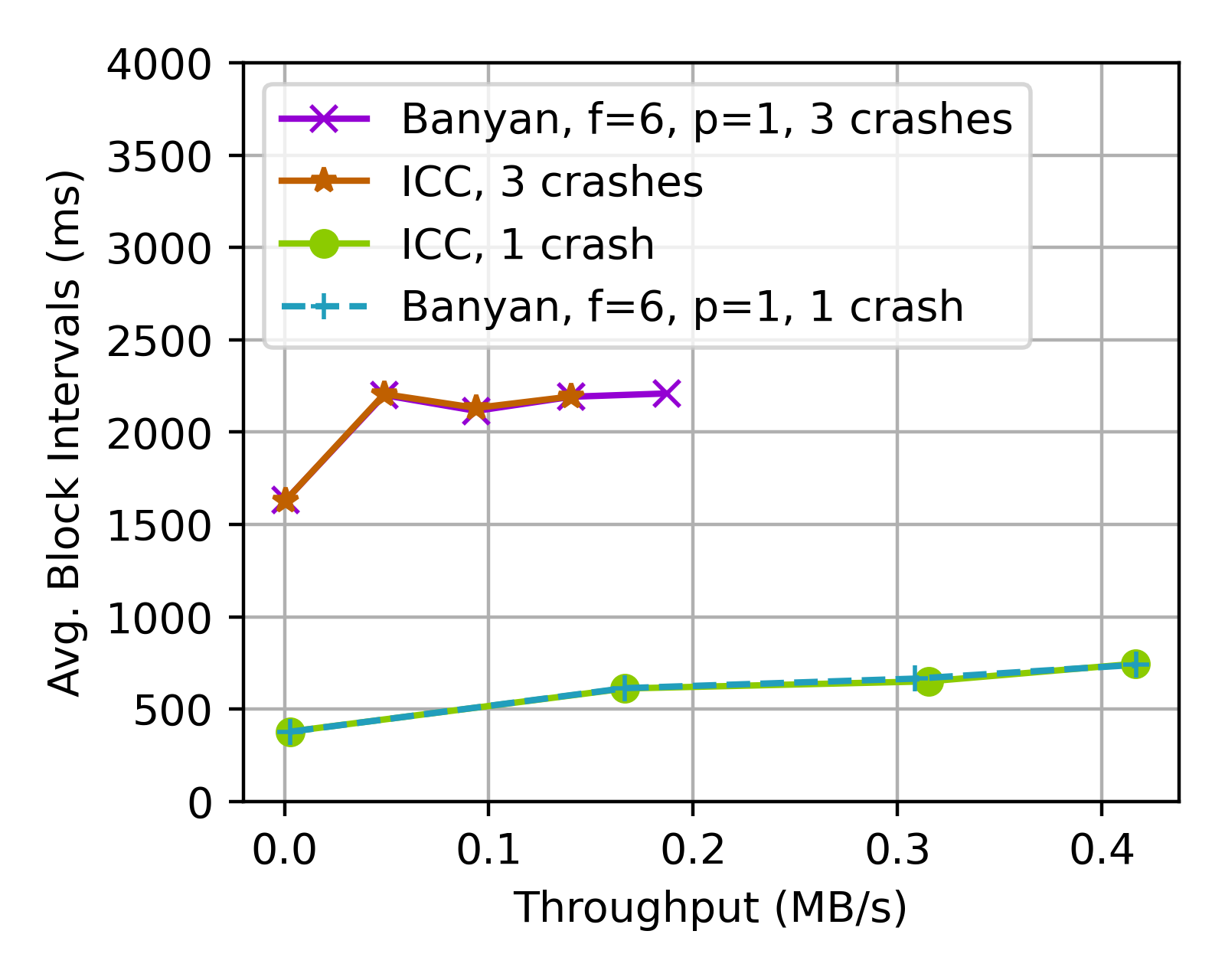}
\caption{Effect of crash-faults on throughput and block intervals for n=19 replicas, spread across 4 US datacenters.}
\label{fig:wan:byz}
\end{subfigure}
\begin{subfigure}[h]{0.33\textwidth}
\includegraphics[width=\textwidth]{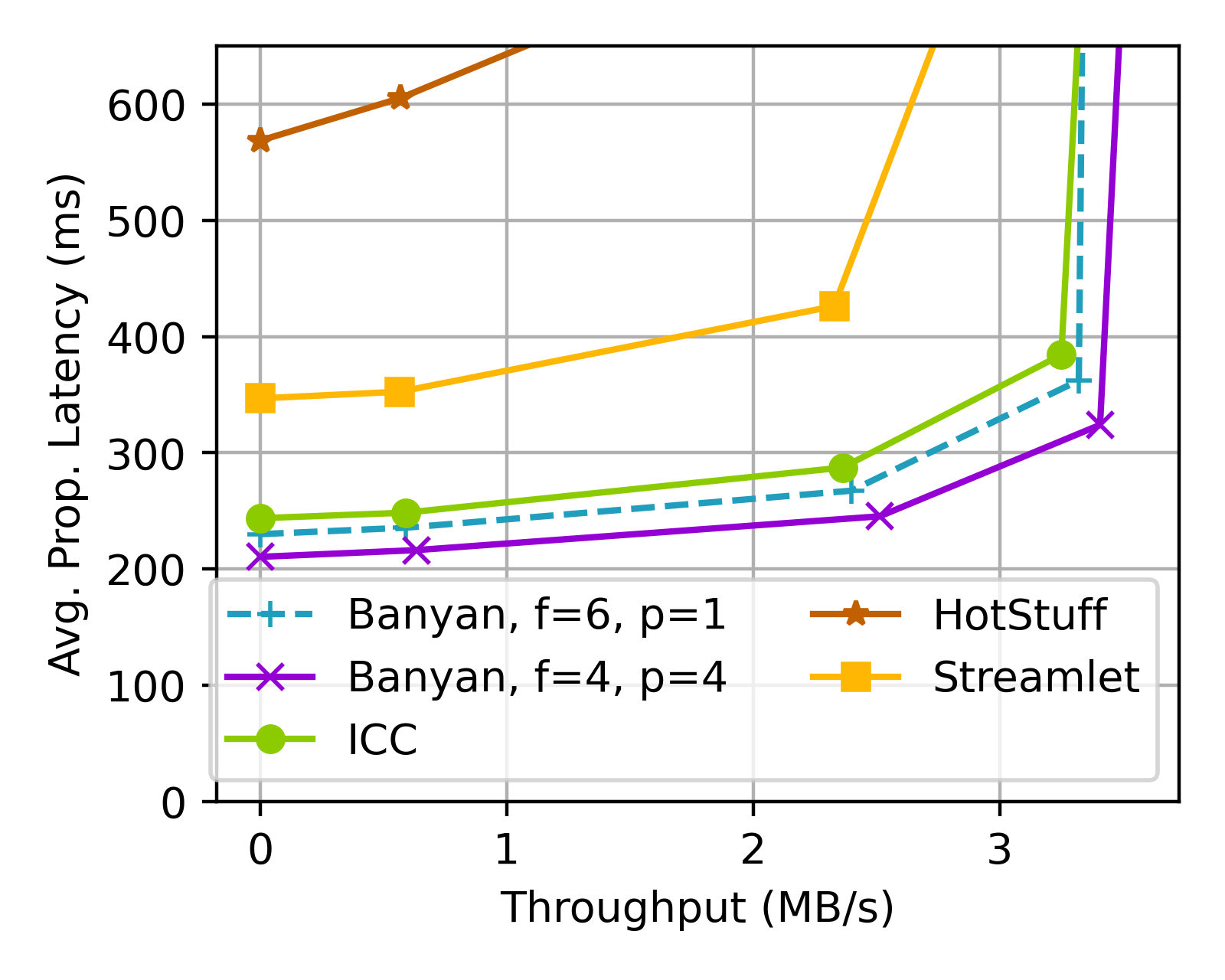}
\caption{Throughput vs. proposal latency for n=19 replicas spread across a global network.}
\label{fig:wan19:prop}
\end{subfigure}
\caption{Evaluation results. }
\label{fig:outlier}
\end{figure*}

\subsection{Performance Evaluation}
\label{exp:equi}

We distribute 19 replicas across 4 datacenters, as shown by the red triangles in \Cref{fig:map}. Each datacenter hosts 5 replicas, except for one that hosts only 4. 
\Cref{fig:n4_19:prop} shows the throughput with varying block sizes. Primarily, we note that \FICC\ with $p=1$ performs consistently better or as well as ICC. For blocks of size 400KB, ICC averages a proposal finalization time of 239ms, while \FICC\ $p=1$ averages a finalization time of 216ms. This improvement of about $10\%$ is expected to be less than the theoretical maximum of $33\%$, because Banyan must hear from all the data centers in the fast path. 
In the slow path, replicas can collect two consecutive quorums without needing to communicate with the furthest datacenter. This observation, namely that more communication rounds with smaller quorum sizes are sometimes faster in practice than fewer rounds with larger quorum sizes, has been made before in similar contexts ~\cite{junqueira2007classic,sousa2015separating}. Moreover, by setting $p=f=4$, we observe that \FICC\ now has an average proposal finalization time of 179ms at the same block size, an improvement of 25.1\%, that is much closer to the theoretical maximum of 33\%. We suggest that this is due to the situation where the 4 co-located replicas are the furthest, in which case the fast path is employed, and thus lowers the average latency.

Next, we measure the potential of the \FICC\ fast path with a low number of replicas, i.e., we use only a single replica at each datacenter ($n=4$). Crucially, in this case the fast path fires with the same conditions as regular notarization, i.e., after receiving just 3 replies. \Cref{fig:n4:prop} shows the throughput with block size increments of 500KB. At block sizes of 1MB, ICC averages a proposal finalization time of 224ms, which \FICC\ manages to improve by 29.9\% to 157ms. 
We emphasize that this large performance increase does not come at the cost of higher variance in latency, as is shown in \Cref{fig:n4:variance}.

\subsection{Effect of Crash-faults}
\label{exp:byz}
In this experiment, we show the effect that crashes have on \FICC\ and \ICC. We distribute 19 replicas across four US datacenters, as shown by the yellow crosses in \Cref{fig:map}. Results are presented in \Cref{fig:wan:byz}.
As has been noted in literature, rotating leader protocols are sensitive to crashes (or malicious behavior), as at least one full timeout duration must expire before progress can be made in the case of a faulty leader. The timeout parameter will crucially impact the results of such scenarios. In the presented experiment, the timeout was set to 3 seconds.
As seen in \Cref{fig:wan:byz}, there are no penalties in trying to take the fast path. When there are failures, the performance of Banyan is exactly the one of ICC.

\subsection{Global Network}
\label{exp:wan}
In this experiment, we simulate a worldwide deployment. We include almost all AWS datacenter locations available to us and distribute 19 replicas across the globe, as shown by the black dots in \Cref{fig:map}.
In this setting and with 1MB payloads, the average proposal finalization time measured for the \ICCal\ implementation was 384ms. 
By running \FICCal\ with $f=6, p=1$, the average proposal finalization time is reduced by 5.8\% to 362ms for ``free''. 
Further, for \FICCal\ with $f=4, p=4$, the number drops by 16\% to 324ms (\Cref{fig:wan19:prop}).

\section{Conclusion}

The proposed \FICC\ protocol improves upon current BFT protocols by providing a simultaneous dual mode, without incurring extra cost, thus closing the gap between state-of-the-art SMR protocols and classical fast consensus literature.

We show that it is possible to achieve optimally fast proposal finalization time for a rotating leader SMR protocol. We experimentally demonstrate the effectiveness of the fast path by showing consistent latency improvements compared to ICC, HotStuff and Streamlet. Ultimately, the effectiveness of the fast path depends largely on the network topology, as well as the parameter $p \in [0,f]$. Still, as we can set $p=1$ at essentially no cost, we hypothesize that \FICC\ can have large positive impact in practice, especially in networks with fewer replicas. In the context of permissioned blockchains, and future decentralized Layer-2 sequencers, this setting is undoubtedly compelling, since it is possible to detect adversarial behavior and remove misbehaving replicas~\cite{buterin2017casper, camenisch2022}. %

\section{Acknowledgements}
We thank the anonymous Middleware reviewers for their extensive and valuable reviews, that among other things lead to the improvement of the resilience of \FICC. We further extend our gratitude to Prof. Christian Cachin for his insightful feedback.

\bibliographystyle{acm-bib-format}
\bibliography{bibliography}

\appendix

\end{document}